\documentclass[journal]{IEEEtran}
\usepackage{amsmath,amssymb,amsfonts,amsthm}

\usepackage{bm}
\usepackage{graphicx}
\usepackage{booktabs}
\usepackage{xr-hyper}
\usepackage{algorithm}
\usepackage{algpseudocode}
\usepackage{hyperref}
\usepackage{float}
\usepackage{multirow}
\usepackage{stfloats}    
\usepackage{environ, wrapfig} 
\usepackage{bbm}

\usepackage[utf8]{inputenc}
\usepackage{xcolor}
\newtheorem{theorem}{Theorem}
\newtheorem{lemma}{Lemma}

\usepackage{refcount}
\usepackage{enumitem}
\newtheorem{assumption}{Assumption}

\providecommand{\abs}[1]{\lvert#1\rvert}
\renewcommand{\b}[1]{\ensuremath{\mathbf{#1}}} 
\newcommand{\bs}[1]{\ensuremath{\boldsymbol{#1}}} 
\renewcommand{\c}[1]{\ensuremath{\mathcal{#1}}} 
\newcommand{\Ex}[1]{\ensuremath{\mathbb{E}[#1]}}  
\newcommand{\Exp}[1]{\ensuremath{\mathbb{E}'[#1]}}  
\newcommand{\Px}[1]{\ensuremath{\mathbb{P}\left[#1\right]}}  
\newcommand{\ind}{1\hspace{-1.6mm}1} 
\newcommand{\norm}[1]{\ensuremath{\left\|#1\right\|}} 
\renewcommand{\O}[1]{\ensuremath{\mathcal{O}\left(#1\right)}} 
\newcommand{\Ot}[1]{\ensuremath{\tilde{\mathcal{O}}\left(#1\right)}} %
\newcommand{\leqtext}[1]{\ensuremath{\stackrel{\text{#1}}{\leq}}} 
\providecommand{\ip}[2]{\langle #1, #2 \rangle} 
\def \x {{\b{x}}}

\def \u {{\b{u}}}
\def \v {{\b{v}}}
\def \g {{\b{g}}}
\def \y {{\b{y}}}
\def \z {{\b{z}}}
\def \zero {{\b{0}}}
\def \one {{\b{1}}}
\def \T {{\mathsf{T}}}

\def \d {{\mathrm{d}}}
\def \p {{\mathrm{p}}}

\def \Rh {{\hat{R}}}

\def \cA {{\c{A}}}
\def \cB {{\c{B}}}

\def \cP {{\c{P}}}
\def \cM {{\c{M}}}

\def \cL {{\c{L}}}
\def \cN {{\c{N}}}

\def \cOt {{\tilde{\c{O}}}}

\def \cS {{\c{S}}}

\def \cR {{\c{R}}}

\def \cZ {{\c{Z}}}

\def \sX {{\mathsf{X}}}

\def \lam {{\bs{\lambda}}}
\def \lamh {{\hat{\lam}}}
\def \lamt {{\tilde{\lam}}}
\def \blamdt {{\bar{\lam}_\d^T}}
\def \blamt {{\bar{\lambda}^T}}
\def \phihk {{\hat{\phi}_K}}

\def \Lam {{\bs{\Lambda}}}

\def \EE {{\mathbb{E}}}
\def \Rn {{\mathbb{R}}}

\theoremstyle{definition}
\newtheorem{rem}{Remark}

\begin{document}

\title{Stochastic Geometry Meets Optimization: Data-Driven Cellular Planning}

\author{Abhishek~K.~Shukla, Abhishek~K.~Gupta,
and~Ketan~Rajawat%
\thanks{The authors are with the Department of Electrical Engineering,
Indian Institute of Technology Kanpur, Kanpur 208016, India
(e-mails: \texttt{abhisheks20@iitk.ac.in; gkrabhi@iitk.ac.in;
ketan@iitk.ac.in}). Part of this work was presented at the 2026 Allerton Conference. Supplementary material includes the MATLAB codes and additional proofs, available at \url{https://home.iitk.ac.in/~ketan/Supplementary}}%
}
\maketitle

\begin{abstract}
    Deployment optimization of base stations (BSs) is critical for improving cellular network performance metrics such as coverage and capacity, impacting user experience, scalability, and infrastructure cost-effectiveness. Traditional methods that optimize precise BS locations do not generally provide broader insights into spatial deployment priorities or alternative viable deployments. Stochastic geometry approaches, on the other hand, offer a more interpretable intensity-based framework, but rarely utilize real-world user location data or investigate optimal BS deployment strategies. In this work, we propose a data-driven stochastic geometry-based approach that optimizes BS deployment intensity to maximize a network performance metric using samples from an unknown user-location distribution. We specialize the framework to an inhomogeneous Poisson point process (PPP) for cellular coverage analysis. The resulting infinite-dimensional problem is discretized over the deployment area and solved using sampled user locations. We characterize the discretization and sampling errors, apply stochastic mirror ascent to solve the empirical problem, and analyze the selection of concrete BS deployments using independent validation samples. Unlike traditional methods, our approach yields an optimized spatial intensity that reveals deployment priorities across regions and accommodates practical deployment constraints, multi-tier networks, and incremental deployment scenarios. Numerical experiments with synthetic and real-world spatial demand data demonstrate the proposed approach under different network configurations.
\end{abstract}

\begin{IEEEkeywords}
	Stochastic geometry, base station deployment, optimization
\end{IEEEkeywords}

\section{Introduction}
In cellular networks, deployment optimization refers to the problem of determining where to place the BSs or access points in a network, so as to maximize performance metrics such as user coverage and network capacity. As mobile networks become increasingly dense, heterogeneous, and complex, strategic BS placement plays a critical role in network planning, directly impacting user experience, scalability, and the cost-effectiveness of infrastructure investments \cite{lee2015base,Jurdi2017coverageholes}. The concept of deployment optimization encompasses a range of scenarios, including selectively deactivating underutilized BSs for energy savings \cite{suarez2015energy, le2017cell} and temporarily deploying mobile or UAV-based BSs to address localized surges in demand \cite{wang2019deployment, lyu2016placement, zhao2018deployment}, underscoring its broad applicability in modern network designs. 

Classical approaches have focused on finding the exact locations of BSs by directly optimizing a relevant performance metric, using a range of mathematical programming approaches \cite{huang2022deployment}, heuristics \cite{akram2020multicriteria}, and game-theoretic formulations \cite{fu2023joint}. These methods seek a single deployment for a given network state and feasible-site set. Such a solution is appropriate when the exact sites are known a priori, but provides limited guidance when site availability, deployment constraints, or network conditions change as the network evolves. In particular, it does not characterize alternative viable deployments or indicate which regions remain attractive for deployment under related network configurations.



Recently, stochastic geometry has emerged as a tractable approach to model and analyze wireless networks by accounting for various practical effects including BS locations, fading coefficients, shadowing, blockage, and antenna directionality, under reasonable assumptions \cite{andrews2023SGBook}. In particular, homogeneous Poisson point process models for users and BSs often yield simple expressions for network performance metrics such as coverage probability and rate coverage, and are widely used for analysis \cite{Dhillon2012hetnet, Gupta2016mmwave} as well as BS intensity optimization \cite{di2018system}. More recent stochastic-geometry models also account for the spatial coupling between user hotspots and BS locations arising in user-centric small-cell deployments \cite{saha2019unified}. However, these works primarily analyze prescribed spatial models rather than optimize the BS intensity using observed user locations. 

A key challenge in computing the optimal BS deployment is the lack of knowledge of the underlying mobile user distribution. In theoretical studies, it is common to make simplifying assumptions that scatter users according to a stationary uniform distribution \cite{andrews2023SGBook}. Such simplistic models, however, do not capture the behavior of real-world users, which may congregate at certain spots, avoid specific zones, or exhibit multi-scale behavior, motivating the need for data-driven modeling. In related contexts, recent studies have increasingly utilized actual user location data, collected from cellular traces \cite{hammami2016network} and Wi-Fi logs \cite{kulacz2022dynamic} over time, in order to build realistic user density maps using machine learning approaches such as kernel density estimation. We remark that an additional challenge here is that at any given time, only a subset of users is generally active and observable. 

In this work, we put forth a novel data-driven stochastic geometry-based approach for determining an optimal BS deployment to provide connectivity to users spread over a specified region. Despite the huge potential and multiple applications in 5G/6G network planning, such a formulation has never been considered in the literature. In particular, we seek to design an optimal BS deployment that provides coverage to users whose underlying location distribution is unknown but observed via a few sample locations. To this end, we model the BSs as following an  inhomogeneous PPP with intensity function $\lambda(\x)$. Our objective is to compute the optimal $\lambda(\x)$ that maximizes a specific network performance metric $R(\lambda)$, such as coverage or throughput. 

Unlike traditional deployment approaches that yield precise BS locations, our approach yields an intensity function that reflects the optimal BS intensity across regions, offering insights into the deployment needs (e.g. densification vs. interference management), while also adapting to actual user patterns. The intensity function can also be interpreted as a probabilistic priority function for BS soft-deployment to guide an operator. Given the BS intensities, exact BS locations may subsequently be extracted while taking into account other geographical, economic, and social constraints. The proposed approach is also amenable to varied practical deployment constraints, multi-tier networks, and incremental deployment scenarios. Remarkably, it is known that under sufficiently strong shadowing, even deterministic BS deployments may exhibit propagation characteristics similar to those of Poisson networks \cite{Bartek2015PPP}, providing further motivation for studying intensity-based soft deployment.

At the outset, determining the optimal BS intensity using real-world user data is challenging on two fronts: the intensity function $\lambda(\x)$ is infinite-dimensional and the underlying user distribution is unknown. To address the first challenge, we discretize the area of interest into $D$ smaller sub-regions, so that $\lambda(\x)$ can be approximated with a piecewise constant function, rendering the optimization finite-dimensional. Second, to bypass the need for the unknown user distribution, we approximate $R(\lambda)$ with the empirical average over the $M$ sampled user locations. While the solution of the approximate problem is suboptimal with respect to the original problem, we establish that the expected optimality gap is no more than $\epsilon$ provided that $D = \cOt(\epsilon^{-2})$ and $M = \cOt(\epsilon^{-4})$. Interestingly, these bounds are generally applicable to a broader class of variational optimization problems, and may be of independent interest. 

Next, we provide a stochastic mirror ascent algorithm for solving the finite-dimensional empirical problem efficiently. We then derive the coverage probability expressions for an arbitrary BS intensity under instantaneous SINR-based and average SINR-based cell association rules, which can serve as the objectives of our maximization. We also propose a procedure for generating concrete BS deployments from the optimized intensity and selecting a suitable realization using independent validation samples, while quantifying the associated placement error. We highlight various practical applications of the proposed approach in 5G/6G networks, including multi-tier deployments, incremental deployment in the presence of existing BSs, and deployment with exclusion zones. Numerical investigations demonstrate the usefulness of the approach and provide insights into how the optimized BS intensity depends on user distribution, existing infrastructure, and deployment constraints. In particular, the optimized intensity reveals not only preferred deployment regions but also alternative viable locations and their relative priorities, which are not evident from a single deterministic deployment. To the best of our knowledge, this is the first framework that learns a fully spatially varying BS point-process intensity directly from user-location samples while jointly quantifying discretization, statistical estimation, and realization-selection errors.


\noindent{\emph{Notation:}} We denote scalars by regular font letters, vectors by bold font letters, and sets by calligraphic capital letters. The notation $\norm{\cdot}$
is used for both $L^2$ and $\ell^2$ norms, depending on whether the argument is a function or a vector, respectively. 

\section{Problem Formulation} \label{system_model}
We consider the design of a cellular network where the BSs follow a point process $\Phi_\lambda$ parameterized by an intensity  function $\lambda:\cA\rightarrow \Rn_+$ where $\cA \subset \Rn^2$ is a compact region with area $A = \abs{\cA}$ and diameter $A_d$. The users are also distributed according to an independent point process $\Psi_u$ over $\cA$ with user locations following distribution $\z \sim P_\z$. Let $r(\lambda,\z)$ denote the performance metric of interest for a user located at $\z$, such as coverage probability (CP) $p_c = \mathbb{P}[\mathrm{SINR}>\tau]$ or per-user rate coverage $r_c = \mathbb{P}[\mathrm{Rate} > \tau]$. The average user performance is given by $R(\lambda)=\EE_{\z\sim P_\z}[r(\lambda,\z)]$. We seek an intensity $\lambda$ that maximizes $R(\lambda)$. 
Examples of $R$ include the average SINR (or rate) coverage which denotes the fraction of users that are in coverage, i.e., their SINR (or rate) is above the target threshold. The optimized intensity $\lambda$ provides a spatially adaptive guideline for BS deployment, where higher intensity values indicate regions requiring denser BS placement while lower values suggest sparser deployment.

The problem of optimally deploying an average of $N$ BSs in the region $\cA$ can be written as  
\begin{align}
	\lambda^\star &= \arg\max_{\lambda \in L_+^1(\cA)} \Ex{r(\lambda, \z)} \tag{$\cP$}\label{p}\\
	&\text{s. t. } \hspace{5mm}\int_{\cA} \lambda(\x) d\x  = N \label{pcons}
\end{align}
where $L_+^1(\cA)$ denotes the set of nonnegative Lebesgue-integrable functions on $\cA$. Denoting the set of admissible intensity functions by $\cS := \{\lambda \in L_+^1(\cA) \mid \int_{\cA} \lambda(\x) d\x  = N\}$, we can write \eqref{p} compactly as $\lambda^\star = \arg\max_{\lambda \in \cS} R(\lambda)$. To avoid trivial cases, we will assume that $\max_{\lambda \in \cS}R(\lambda)$ is finite and achieved by some $\lambda^\star \in \cS$.

As discussed earlier, without an explicit user distribution, we cannot directly solve \eqref{p}, as the expectation in the objective function cannot be evaluated. Instead, given $M$ independent and identically distributed (i.i.d.) user-location samples $\cZ := \{\z_m\}_{m=1}^M$, where $\z_m \sim P_\z$, we can replace $R(\lambda)$ by its empirical approximation $\hat{R}(\lambda) = \frac{1}{M}\sum_{m=1}^M r(\lambda,\z_m) \approx \Ex{r(\lambda,\z)}$. Such an empirical estimate accommodates arbitrary and spatially heterogeneous user distributions, and improves in accuracy as $M$ increases. To obtain a finite-dimensional problem suitable for numerical optimization, we also need to discretize the intensity function. To this end, we partition the region $\cA$ into $D$ disjoint equal-area sub-regions denoted by $\cA_1$, $\ldots$, $\cA_D$ such that $\cup_k \cA_k = \cA$. The piecewise constant approximation of $\lambda(\x)$ is denoted by $\lambda_\d(\x) := \sum_{k=1}^D \lambda_{\d k} \ind_{\x \in \cA_k}$  where 
\begin{align}\label{discretization}
	\lambda_{\d k} = \frac{1}{\abs{\cA_k}} \int_{\cA_k} \lambda(\x) d\x
\end{align}
and $\ind$ denotes the indicator function which takes the value 1 if the condition in its subscript is true and zero otherwise. Observe further that since $\abs{\cA_k} = A/D$, the left-hand side of the constraint in \eqref{pcons} simplifies as
\begin{align}
	\int_{\cA}\lambda(\x) d\x = \sum_{k=1}^D \int_{\cA_k} \lambda(\x)d\x = \frac{A}{D} \sum_{k=1}^D \lambda_{\d k}.
\end{align}
Therefore, the constraint \eqref{pcons} can be written as $\sum_k \lambda_{\d k} = ND/A$. Collecting $\lambda_{\d 1}$, $\ldots$, $\lambda_{\d D}$ into the vector $\lam_\d \in \Rn^D$, we see that $\lambda_\d \in \cS$ is equivalent to $\lam_\d \in \Lam := \{\lam_\d\succeq \zero \mid \one^\T\lam_\d= \frac{ND}{A}\}$. Hence, the modified problem becomes
\begin{align}
	&\lamh_\d = \arg\max_{\lam_\d \in \Lam} \frac{1}{M} \sum_{m=1}^M r(\lam_\d, \z_m) \label{pe}\tag{$\hat{\cP}$}
\end{align}
where, with some abuse of notation, we allow functionals $r$, $R$, and $\Rh$ to also take the vectors such as $\lam_\d$ and $\lamh_\d$ (in addition to the \emph{functions} $\lambda_\d$ and $\hat{\lambda}_\d$) as inputs, i.e., $r(\lam_\d, \z) := r(\lambda_\d, \z)$ and $R(\lam_\d ) := R(\lambda_\d)$. In a similar vein, the optimal value of \eqref{pe} is given by $\Rh(\lamh_\d) := \Rh(\hat{\lambda}_\d)$ where we define the function $\hat{\lambda}_\d(\x): = \sum_{k=1}^D \hat{\lambda}_{\d k} \ind_{\x \in \cA_k}$. Hence, \eqref{pe} can be seen as a discretized and sampled version of \eqref{p}. As \eqref{pe} is a finite-dimensional optimization problem, it can be solved using standard numerical methods. In the next section, we characterize the excess error associated with solving \eqref{pe} instead of \eqref{p}. In practice, the optimization algorithm itself may incur additional error, e.g., if \eqref{pe} cannot be solved exactly due to nonconvexity. We do not analyze that component of the error in this section. 

\section{Bound on the excess error}\label{sec-excess}
We define the excess error as $E = \Ex{\abs{R(\lambda^\star) - R(\hat{\lambda}_\d)}} = \Ex{\abs{R(\lambda^\star) - R(\lamh_\d )}}$ where the expectation is with respect to the random samples $\{\z_m\}_{m=1}^M$. As we describe below, this error consists of two factors arising from discretizing the intensity function and sampling the user distribution. This section characterizes $E$ in terms of the average BS count $N$, discretization level $D$, and number of user-location samples $M$, thereby quantifying how these parameters affect the quality of the learned BS intensity. 

To separate the effects of discretization and sampling, we  introduce the oracle discretized solution 
\begin{align}\label{lamd}
	\lamt_\d  = \arg\max_{\lam_\d\in \Lam} R(\lam_\d ),
\end{align}
which optimizes the exact average metric $R(\lam_\d)=\Ex{r(\lam_\d,\z)}$ over the same discretized feasible set $\Lam$, while the expectation is with respect to the true user distribution. Let $\tilde{\lambda}_\d$ denote the piecewise constant intensity function corresponding to $\lamt_\d$. Then,
\begin{align}
	E &= 	\Ex{\abs{R(\lambda^\star) - R(\lamt_\d) + R(\lamt_\d ) - R(\lamh_\d )}} \\
	&\leq \abs{R(\lambda^\star) - R(\tilde{\lambda}_\d)} + \Ex{\abs{R(\lamt_\d ) - R(\lamh_\d )}} \label{decom}
\end{align}
where we have used the triangle inequality and the fact that $R(\lamt_\d ) = R(\tilde{\lambda}_\d)$ as explained earlier. Here, the first term captures the loss due to restricting the original infinite-dimensional problem to piecewise constant intensities, while the second term captures the additional loss due to replacing the exact objective by its empirical approximation. Accordingly, we define the discretization and sampling errors as
\begin{align}
	E_\d &:= \abs{R(\lambda^\star) - R(\tilde{\lambda}_\d)}, \label{ed} \\
	E_s &:= \Ex{\abs{R(\lamt_\d ) - R(\lamh_\d )}}. \label{es}
\end{align}
The subsequent subsections provide bounds on these two error terms, respectively. 

\subsection{Bound on the discretization error} 
Let $\lam_\d^\star$ denote the vector obtained by discretizing the optimal $\lambda^\star$ as described earlier in \eqref{discretization} so that $\lambda_\d^\star(\x) = \sum_{k} \lambda_{\d k}^\star \ind_{\x \in \cA_k}$. Similarly, let $\tilde{\lambda}_\d(\x) := \sum_k \tilde{\lambda}_{\d k}\ind_{\x \in \cA_k}$. Since $\tilde{\lambda}_\d \in \cS$, it follows from the definition of $\lambda^\star$ that $R(\lambda^\star) \geq R(\tilde{\lambda}_\d)$. Likewise, since $\lam_\d^\star \in \Lam$, it follows from the definition of $\lamt_\d $ that $R(\tilde{\lambda}_\d) \geq R(\lambda_\d^\star)$. Combining these inequalities, we obtain
\begin{align}
	0 \leq R(\lambda^\star) - R(\tilde{\lambda}_\d) \leq R(\lambda^\star) - R(\lambda_\d^\star) \\
	\Rightarrow E_\d =\abs{R(\lambda^\star) - R(\tilde{\lambda}_\d)} \leq \abs{R(\lambda^\star) - R(\lambda_\d^\star)}. \label{edbound1}
\end{align}
It remains to bound $\abs{R(\lambda^\star)-R(\lambda_\d^\star)}$, which captures the loss incurred by replacing $\lambda^\star$ with its piecewise constant approximation. 

To obtain such a bound, we need to impose continuity of $R$ with respect to an appropriate metric. Since $\lambda$ describes the spatial distribution of a fixed BS mass over $\cA$, we can use the Wasserstein-1 distance metric, also known as the earth mover's distance. Specifically, let $\cL_1$ be the set of all 1-Lipschitz functions on $\cA$ so that 
\begin{align*}
	\cL_1 = \{\varphi:\cA \rightarrow \Rn\mid \abs{\varphi(\x) - \varphi(\y)} \leq \norm{\x-\y}, \ \forall \x,\y \in \cA\}
\end{align*}
where $\norm{\cdot}$ is the Euclidean norm. Then, the Wasserstein-1 distance between two intensities $\lambda$ and $\lambda'$ is given by 
\begin{align}
	W_1(\lambda,\lambda') = \sup_{\varphi \in \cL_1} \left\{\int_{\cA} \varphi(\x)(\lambda(\x)-\lambda'(\x)) d\x\right\}\label{Wdef}
\end{align}
where $\int_\cA\lambda(\x)\d\x = \int_{\cA}\lambda'(\x)\d\x = N$. Unlike the $L^2$ metric, $W_1$ accounts for the distance over which spatial mass must be transported and is a more natural way of comparing such equal-mass measures. We now state the key regularity condition on $r(\cdot,\z)$. 
\begin{assumption}\label{Wlip}
	For every $\z\in\cA$, the function $r(\cdot,\z)$ is $L$-Lipschitz continuous on $\cS$ with respect to the $W_1$ metric, i.e., 
	\begin{align}
		|r(\lambda,\z) - r(\lambda',\z)| &\leq LW_1(\lambda,\lambda'), & \lambda,\lambda'\in \cS.\label{Wlipeq}   
	\end{align}
\end{assumption}
The continuity assumption is natural for many performance metrics such as coverage and network capacity. Assumption \ref{Wlip} will later be verified for the coverage probability of a downlink cellular network in which the BSs follow an IPPP.

Assumption \ref{Wlip}, together with Jensen's inequality for the absolute value, implies that $R$ is also $L$-Lipschitz continuous with respect to $W_1$ since 
\begin{align}
	\abs{R(\lambda)-R(\lambda')}  &= \abs{ \Ex{r(\lambda, \z) - r(\lambda', \z)}} \label{rlip0}\\
	&\leq \Ex{\abs{r(\lambda, \z) - r(\lambda', \z)}} \label{rlip1}\\
	&\leq LW_1(\lambda,\lambda').\label{rlip2}
\end{align}
We also have the following lemma. 
\begin{lemma}\label{edlem}
    Under Assumption \ref{Wlip}, and if the partitions satisfy
    \begin{align}
    \max_{1\leq k\leq D}\sup_{\x\in\cA_k}
    \norm{\x-\x_k}
    \leq c_{\mathrm g}\sqrt{\frac{A}{D}},\label{cg}
    \end{align}
    where $\x_k \in \cA_k$ is the centroid of $\cA_k$ and $c_{\mathrm g}$ is a geometric constant independent of $D$, we have the bound
    \begin{align}
    E_\d\leq 2c_{\mathrm g}NL\sqrt{\frac{A}{D}}.
    \end{align}
\end{lemma}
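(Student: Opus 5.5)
We start from \eqref{edbound1}, which gives $E_\d \le \abs{R(\lambda^\star)-R(\lambda_\d^\star)}$. Then we use the Lipschitz property of $R$ in \eqref{rlip0}--\eqref{rlip2}, which follows from Assumption~\ref{Wlip}, to get $E_\d \le L\,W_1(\lambda^\star,\lambda_\d^\star)$. This step is legitimate because $\lambda_\d^\star\in\cS$: the cell averages in \eqref{discretization} preserve nonnegativity and total mass $N$. The whole proof therefore reduces to the geometric estimate
\begin{align*}
W_1(\lambda^\star,\lambda_\d^\star)\le 2c_{\mathrm g}N\sqrt{A/D}.
\end{align*}

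To prove this estimate, fix an arbitrary $\varphi\in\cL_1$. By \eqref{discretization}, $\lambda^\star$ and $\lambda_\d^\star$ have the same mass on each cell $\cA_k$, namely $\int_{\cA_k}\lambda^\star = \int_{\cA_k}\lambda_\d^\star = \frac{A}{D}\lambda^\star_{\d k}$. We can therefore subtract the constant $\varphi(\x_k)$ inside each cell at no cost:
\begin{align*}
\int_\cA \varphi(\lambda^\star-\lambda_\d^\star)\,d\x=\sum_{k=1}^D\int_{\cA_k}\big(\varphi(\x)-\varphi(\x_k)\big)\big(\lambda^\star(\x)-\lambda_\d^\star(\x)\big)\,d\x .
\end{align*}
The 1-Lipschitz property and \eqref{cg} give $\abs{\varphi(\x)-\varphi(\x_k)}\le \norm{\x-\x_k}\le c_{\mathrm g}\sqrt{A/D}$ for $\x\in\cA_k$. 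Hence the sum is bounded by
\begin{align*}
c_{\mathrm g}\sqrt{A/D}\int_\cA\big(\lambda^\star+\lambda_\d^\star\big)\,d\x=2c_{\mathrm g}N\sqrt{A/D},
\end{align*}
using $\abs{\lambda^\star-\lambda_\d^\star}\le \lambda^\star+\lambda_\d^\star$ and nonnegativity. Taking the supremum over $\varphi$ in \eqref{Wdef} gives the $W_1$ bound, and multiplying by $L$ gives the lemma. The factor $2$ is exactly the price of this crude triangle bound; a coupling argument that moves mass within each cell would give a sharper constant, but that is not needed here.

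The only delicate point is checking that $\lambda_\d^\star\in\cS$ and that the constant-subtraction step is valid, i.e., that the per-cell masses match. This holds because the discretization is the cell average rather than a point sample. We also need the centroid $\x_k$ to lie in $\cA_k$, or at least \eqref{cg} to hold for whatever reference point is used; the argument only ever uses \eqref{cg}, so any such point works. No measurability or integrability issues arise, since $\lambda^\star\in L^1_+(\cA)$ and $\varphi$ is bounded and continuous on the compact set $\cA$.
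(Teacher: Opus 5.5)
Your proof is correct and follows the paper's argument essentially step for step. You reduce to $L\,W_1(\lambda^\star,\lambda_\d^\star)$ via \eqref{edbound1} and \eqref{rlip0}--\eqref{rlip2}, subtract $\varphi(\x_k)$ using the matching per-cell masses, apply the 1-Lipschitz property with \eqref{cg}, and bound $\int_{\cA}\abs{\lambda^\star-\lambda_\d^\star}\,d\x\le 2N$ by the triangle inequality. Your explicit check that $\lambda_\d^\star\in\cS$, which is needed for Assumption~\ref{Wlip} to apply, is a detail the paper leaves implicit.
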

\begin{IEEEproof}
	Using \eqref{rlip0}-\eqref{rlip2} in \eqref{edbound1}, we obtain
	\begin{align}
		E_\d &\leq LW_1(\lambda^\star,\lambda_\d^\star). \label{Ed1}
	\end{align}
	and therefore it suffices to bound $W_1(\lambda^\star,\lambda_\d^\star)$. Let us denote 
	\begin{align}
		w(\varphi) &:=  \int_{\cA} \varphi(\x)\delta\lambda^\star(\x) d\x = \sum_{k=1}^D \int_{\cA_k} \varphi(\x)\delta{\lambda}^\star(\x) d\x
	\end{align}
	where $\delta{\lambda}^\star:=\lambda^\star - \lambda_\d^\star$. Since $\int_{\cA_k} \delta{\lambda}^\star(\x) d\x = 0$, it holds that
	\begin{align}
		\abs{w(\varphi)} &= \abs{\sum_{k=1}^D \int_{\cA_k} (\varphi(\x)-\varphi(\x_k))\delta{\lambda}^\star(\x) d\x} \\
        &\leq \sum_{k=1}^D \int_{\cA_k} \abs{\varphi(\x)-\varphi(\x_k)}\abs{\delta{\lambda}^\star(\x)} d\x \\
		&\leqtext{$\varphi \in \cL_1$} \sum_{k=1}^D \int_{\cA_k} \norm{\x-\x_k}\abs{\delta{\lambda}^\star(\x)} d\x \\
		&\hspace{-1cm}\leqtext{\eqref{cg}} c_{\mathrm g}\sqrt{\tfrac{A}{D}}
        \sum_{k=1}^D\int_{\cA_k}
        \abs{\lambda^\star(\x)-\lambda_\d^\star(\x)}d\x \leq 2c_{\mathrm g}N\sqrt{\tfrac{A}{D}}, \label{wphi}
	\end{align}
    where \eqref{wphi} follows from the triangle inequality and the fact that $\int_{\cA}\lambda^\star(\x)d\x = \sum_k \int_{\cA_k} \lambda^\star(\x)d\x = N$ and likewise for $\lambda^\star_\d$. 
    Since this bound holds for every $\varphi\in\cL_1$, it follows that
\begin{align}
	W_1(\lambda^\star,\lambda_\d^\star)
	= \sup_{\varphi\in\cL_1} w(\varphi)
	\leq \sup_{\varphi\in\cL_1}\abs{w(\varphi)}
	\leq 2c_{\mathrm g}N\sqrt{\tfrac{A}{D}}.
\end{align}
Combining this with \eqref{Ed1} gives the required bound.
\end{IEEEproof}
The geometric condition in Lemma \ref{edlem} is satisfied by many standard partitions. For instance, if the partition is a regular polygon with at least 3 sides, the condition holds with $c_\mathrm{g} = 1$. More generally, the condition holds for any shape-regular family of partitions whose maximum centroidal radius scales as $\O{\sqrt{A/D}}$. It also accommodates nonuniform and adaptive meshes, such as those considered in \cite{mcrae2018optimal}, provided that their maximum centroidal radius satisfies the stated bound.

Intuition for the discretization error bound comes from viewing the replacement of $\lambda^\star$ by $\lambda_\d^\star$ as a \emph{mass transport} problem. The piecewise constant approximation only redistributes BS mass within each sub-region, and the maximum displacement from any point within a sub-region to its centroid is bounded by $c_{\mathrm{g}}\sqrt{A/D}$. Since the total intensity mass is $N$, the resulting $W_1$ distance, and hence the discretization error, scales linearly with $N$ and with the characteristic cell length $\sqrt{A/D}$, so that $E_\d=\mathcal{O}(NL\sqrt{A/D})$. The shape of the sub-regions affects the geometric constant $c_{\mathrm g}$, but not the scaling of the bound with $D$ or $N$.

Finally, the dependence on $D$ can be improved by making additional but unverifiable assumption on $\lambda^\star$ as stated in the following remark. 
\begin{rem}\label{ed2}
	If the optimal normalized intensity $\mu^\star = \tfrac{1}{N}\lambda^\star$ has bounded variation, i.e., $\int_{\cA} \norm{\nabla \mu^\star(\x)}d\x \leq B_\star$, then it holds that $E_\d \leq \frac{LNAB_\star}{D}$, as shown in Appendix \ref{ed2proof} in the supplementary material. However, since $\lambda^\star$ is itself unknown, it is not possible to verify if it has bounded variation or to obtain $B_\star$. Therefore, this improved rate cannot be used as an a priori bound without additional information about $\lambda^\star$. 
\end{rem}

\subsection{Bound on the sampling error}
Since $\lamt_\d$ maximizes $R$ over $\Lam$ and $\lamh_\d\in\Lam$, it follows that $R(\lamt_\d)-R(\lamh_\d)\geq 0$ so that 
\begin{align}
    E_s=\Ex{R(\lamt_\d)-R(\lamh_\d)}. \label{es2}
\end{align}
The term within the expectation can be decomposed as:
\begin{align}
	R(\lamt_\d ) - R(\lamh_\d ) =& \left(R(\lamt_\d ) - \Rh(\lamt_\d )\right) + \left(\Rh(\lamt_\d ) - \Rh(\lamh_\d )\right) \nonumber\\
	&+ \left(\Rh(\lamh_\d ) - R(\lamh_\d )\right).\label{rdbound}
\end{align}
Since $\lamh_\d  = \arg\max_{\lam_\d\in \Lam}\Rh(\lam_\d )$ and $\lamt_\d  \in \Lam$, it follows that $\Rh(\lamh_\d ) \geq \Rh(\lamt_\d )$ and we can drop the second term on the right. Therefore, the difference can be bounded as
\begin{align}
	R(\lamt_\d ) - R(\lamh_\d ) \leq& \max_{\lam_\d\in \Lam}\left(R(\lam_\d ) - \Rh(\lam_\d )\right)  \nonumber\\
	&+ \max_{\lam_\d \in\Lam}\left(\Rh(\lam_\d ) - R(\lam_\d )\right). \label{esb1}
\end{align}
Observe that the two terms in \eqref{esb1} measure, in either direction, the worst-case difference between the exact objective $R$ and its empirical estimate $\Rh$ over all $\lam_\d\in\Lam$. To bound these terms, we use the \emph{Rademacher complexity} of the class of performance metrics $\{r(\lam_\d,\cdot)\mid \lam_\d\in\Lam\}$, defined as
\begin{align}
	R_M := \Ex{\max_{\lam_\d\in \Lam} \frac{1}{M}\sum_{m=1}^M \varepsilon_m r(\lam_\d,\z_m)}
\end{align} 
where $\{\varepsilon_m\}$ are i.i.d. random variables drawn from the Rademacher distribution, i.e., $\mathbb{P}(\varepsilon_m = 1) = \mathbb{P}(\varepsilon_m = -1) = 1/2$. The expectation is with respect to both $\{\varepsilon_m\}$ and $\{\z_m\}$. In the context of learning, the Rademacher complexity measures how well a particular hypothesis class can fit to random labels. In the present setting, the relevance of $R_M$ is made precise by the following lemma.
\begin{lemma}\label{rad}
	It holds that
	\begin{align}
		\Ex{\max_{\lam_\d\in \Lam}\left(R(\lam_\d ) - \Rh(\lam_\d )\right)}  &\leq 2R_M \label{radeq1}\\
		\Ex{\max_{\lam_\d \in\Lam}\left(\Rh(\lam_\d ) - R(\lam_\d )\right)} & \leq 2R_M
	\end{align}
	where the expectations are with respect to the random user locations $\{\z_m\}$ appearing in the definition of $\Rh$. 
\end{lemma}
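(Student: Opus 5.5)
We prove both inequalities with the standard symmetrization argument based on a ghost sample; the second follows from the first by symmetry. Introduce an independent copy $\cZ' = \{\z_m'\}_{m=1}^M$ of the samples, drawn i.i.d.\ from $P_\z$ and independent of $\cZ$. Write $\Rh'(\lam_\d) = \frac{1}{M}\sum_m r(\lam_\d,\z_m')$, so that $R(\lam_\d) = \EE_{\cZ'}[\Rh'(\lam_\d)]$ for every fixed $\lam_\d$. Then
\begin{align*}
\Ex{\max_{\lam_\d\in\Lam}\left(R(\lam_\d)-\Rh(\lam_\d)\right)} &= \EE_{\cZ}\Big[\max_{\lam_\d\in\Lam}\EE_{\cZ'}\big[\Rh'(\lam_\d)-\Rh(\lam_\d)\big]\Big]\\
&\leq \EE_{\cZ,\cZ'}\Big[\max_{\lam_\d\in\Lam}\tfrac{1}{M}\textstyle\sum_m \big(r(\lam_\d,\z_m')-r(\lam_\d,\z_m)\big)\Big].
\end{align*}
The inequality uses that the maximum of an expectation is at most the expectation of the maximum, i.e., Jensen's inequality applied to the convex map $f \mapsto \max f$.

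Next insert Rademacher signs. For each $m$, the pair $(\z_m,\z_m')$ is exchangeable, so the difference $r(\lam_\d,\z_m')-r(\lam_\d,\z_m)$ has the same joint law, simultaneously over all $\lam_\d$, as $\varepsilon_m\big(r(\lam_\d,\z_m')-r(\lam_\d,\z_m)\big)$ with $\varepsilon_m$ independent Rademacher variables. Using this and then the subadditivity of the maximum, $\max(a+b)\le\max a+\max b$, we get
\begin{align*}
\Ex{\max_{\lam_\d}\left(R-\Rh\right)} \le \EE\Big[\max_{\lam_\d}\tfrac1M\textstyle\sum_m \varepsilon_m r(\lam_\d,\z_m')\Big] + \EE\Big[\max_{\lam_\d}\tfrac1M\textstyle\sum_m (-\varepsilon_m) r(\lam_\d,\z_m)\Big].
\end{align*}
Since $-\varepsilon_m$ has the same distribution as $\varepsilon_m$, and $\z_m'$ has the same distribution as $\z_m$, each term equals $R_M$. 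This gives \eqref{radeq1}.

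For the second inequality, repeat the argument with the roles of $\Rh$ and $\Rh'$ exchanged. Write $\Rh(\lam_\d)-R(\lam_\d) = \EE_{\cZ'}[\Rh(\lam_\d)-\Rh'(\lam_\d)]$, apply the same Jensen step, the same sign insertion, and the same splitting. Equivalently, apply the first bound to the class $\{-r(\lam_\d,\cdot)\}$, whose Rademacher complexity equals $R_M$ by the symmetry of $\varepsilon_m$.

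The main obstacle is technical rather than conceptual: we need the maxima to be measurable and the expectations finite. Here $\Lam$ is a compact simplex, and $r(\cdot,\z)$ is continuous on it; Assumption \ref{Wlip} gives $W_1$-Lipschitz continuity, and on the discretized set $W_1$ is dominated by a multiple of the $\ell^1$ distance. The metric is bounded, e.g.\ a coverage probability lies in $[0,1]$. Under these conditions each maximum is attained and can be replaced by a supremum over a countable dense subset of $\Lam$, which makes it measurable. One also needs Fubini's theorem to justify the iterated expectation in the first step. With these checks in place, every step above is rigorous.
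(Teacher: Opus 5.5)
Your proposal is correct and follows essentially the same route as the paper's proof in Appendix~\ref{radproof}: ghost sample, Jensen's inequality to bring the maximum inside the expectation, Rademacher sign insertion, subadditivity of the maximum, and identification of both resulting terms with $R_M$, with the second inequality handled symmetrically. You also justify two points the paper leaves implicit: that the sign flip holds jointly over all $\lam_\d$ (via exchangeability of the pairs $(\z_m,\z_m')$), and that the maxima are measurable.
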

The proof of Lemma \ref{rad} is provided in Appendix \ref{radproof} of the supplementary material and follows along the lines of similar results in statistical learning theory; see e.g. \cite[Sec. 4.5]{bach2024learning}. A small value of $R_M$ implies a small expected uniform deviation between $\Rh(\lam_\d)$ and $R(\lam_\d)$ over $\lam_\d\in\Lam$. Using Lemma \ref{rad} in \eqref{es2} and \eqref{esb1}, we obtain
	\begin{align}
		E_s &\leq \Ex{\max_{\lam_\d\in \Lam}\left(R(\lam_\d ) - \Rh(\lam_\d )\right)}  \nonumber\\
		&+ \Ex{\max_{\lam_\d \in\Lam}\left(\Rh(\lam_\d ) - R(\lam_\d )\right)} \leqtext{Lemma \ref{rad}} 4R_M. \label{esproof1}
	\end{align}
It therefore suffices to bound $R_M$. Since Assumption \ref{Wlip} characterizes the sensitivity of $r$ in terms of the $W_1$ distance, whereas $R_M$ is indexed by the finite-dimensional vector $\lam_\d$, we first relate changes in $r$ to the Euclidean distance between discretized intensity vectors.
\begin{lemma}\label{rdlip}
	It follows from Assumption \ref{Wlip} that
	\begin{align}
		\abs{r(\lam_\d,\z) - r(\lam_\d',\z)} \leq \frac{L A_d A}{\sqrt{D}} \norm{\lam_\d-\lam_\d'}_2\label{rdlipeq}
	\end{align}
    for all $\lam_\d,\lam_\d'\in\Lam$ and $\z\in\cA$.
\end{lemma}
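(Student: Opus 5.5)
By Assumption \ref{Wlip}, it suffices to bound $W_1(\lambda_\d,\lambda_\d')$ for the piecewise constant functions $\lambda_\d,\lambda_\d'$ associated with $\lam_\d,\lam_\d'\in\Lam$. Both have total mass $N$, so the difference $\delta(\x):=\lambda_\d(\x)-\lambda_\d'(\x)$ integrates to zero over $\cA$. We can therefore subtract a constant from any test function without changing the integral. For $\varphi\in\cL_1$, fix any reference point $\x_0\in\cA$ and write
\begin{align*}
\int_\cA \varphi(\x)\delta(\x)\,d\x = \int_\cA (\varphi(\x)-\varphi(\x_0))\delta(\x)\,d\x .
\end{align*}
Since $\varphi$ is 1-Lipschitz and $\cA$ has diameter $A_d$, we have $\abs{\varphi(\x)-\varphi(\x_0)}\leq A_d$ for all $\x\in\cA$. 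Hence the integral is at most $A_d\int_\cA\abs{\delta(\x)}d\x$, uniformly in $\varphi$.

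Next we evaluate the $L^1$ norm of $\delta$ using the piecewise constant structure. Each cell has area $A/D$, so
\begin{align*}
\int_\cA\abs{\delta(\x)}d\x=\frac{A}{D}\sum_{k=1}^D\abs{\lambda_{\d k}-\lambda'_{\d k}}=\frac{A}{D}\norm{\lam_\d-\lam_\d'}_1 .
\end{align*}
By Cauchy--Schwarz, $\norm{\cdot}_1\leq\sqrt{D}\norm{\cdot}_2$. Combining the two steps gives $W_1(\lambda_\d,\lambda_\d')\leq A_d\frac{A}{\sqrt{D}}\norm{\lam_\d-\lam_\d'}_2$. Multiplying by $L$ via \eqref{Wlipeq} yields \eqref{rdlipeq}.

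There is no real obstacle here. The one point to check is that $\lambda_\d,\lambda_\d'\in\cS$, so that Assumption \ref{Wlip} applies. This holds because membership in $\Lam$ is equivalent to membership in $\cS$ for piecewise constant intensities, as noted in Section \ref{system_model}.
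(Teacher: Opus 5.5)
Your proof is correct and follows the same route as the paper. Both arguments subtract $\varphi$ at a reference point using the equal total mass, bound the oscillation of a 1-Lipschitz function by the diameter $A_d$, write the $L^1$ norm of the piecewise-constant difference as $\frac{A}{D}\norm{\lam_\d-\lam_\d'}_1$, and then apply Cauchy--Schwarz; your explicit check that elements of $\Lam$ correspond to piecewise-constant members of $\cS$, so that Assumption~\ref{Wlip} applies, is a detail the paper leaves implicit.
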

\begin{IEEEproof}
From \eqref{Wdef}, we have: 
\begin{align}
	W_1(\lambda_\d, \lambda_\d') &= \sup_{\varphi \in \cL_1}\int_{\cA} \varphi(\x)(\lambda_\d(\x)-\lambda'_\d(\x))d\x \\
	&\hspace{-1cm}=\sup_{\varphi \in \cL_1}\int_{\cA} (\varphi(\x)-\varphi(\x'))(\lambda_\d(\x)-\lambda'_\d(\x))d\x
\end{align}
since $\int_{\cA}\lambda_\d(\x)d\x = \int_{\cA}\lambda'_\d(\x)d\x$ for any fixed $\x' \in \cA$. Since $\varphi \in\cL_1$, we obtain
\begin{align}
	W_1(\lambda_\d,\lambda'_\d) &\leq \int_{\cA}\norm{\x - \x'} \abs{\lambda_\d(\x)-\lambda'_\d(\x)}d\x \\
	&\hspace{-5mm}\leq \text{diam}(\cA) \sum_{k=1}^D \int_{\cA_k} \abs{\lambda_{\d k}-\lambda'_{\d k}}d\x \\
	&\hspace{-5mm}= A_d \tfrac{A}{D}\norm{\lam_\d-\lam_\d'}_1 \leq A_d\tfrac{A}{\sqrt{D}}\norm{\lam_\d-\lam_\d'}_2
\end{align}
which yields the required Lipschitz constant.
\end{IEEEproof}

Accordingly, let $L_\d:=LA_dA/\sqrt{D}$ denote the resulting Euclidean Lipschitz constant. A key ingredient for bounding $R_M$ is the so-called vector contraction bound, first established in \cite{maurer2016vector}, which is re-stated below using the current notation: 
\begin{align}
	R_M \leq \sqrt{\frac{\pi}{2}}\frac{L_\d}{M}\EE\left[\max_{\lam_\d\in \Lam} \sum_{i=1}^M\sum_{j=1}^D g_{ij}\lambda_{\d j}\right] \label{vecon}
\end{align}
where $g_{ij}$ are i.i.d. standard normal random variables. The vector contraction inequality is stated in \cite{maurer2016vector} using Rademacher random variables, with the constant $\sqrt{2}$. The same argument applied to standard normal random variables yields \eqref{vecon}, with the constant $\sqrt{\pi/2}$, as shown in the proof of \cite[Proposition 1]{maurer2016vector}. The following lemma gives the resulting sampling-error bound.
\begin{lemma}\label{eslem}
	Under Assumption \ref{Wlip}, we have the bound
	\begin{align}
		E_s \leq 4NL A_d\sqrt{\frac{\pi D\log(D)}{M}}\label{esbound}
	\end{align}
\end{lemma}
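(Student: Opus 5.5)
We combine \eqref{esproof1} with the vector contraction bound \eqref{vecon}, which reduces the problem to bounding a Gaussian complexity of the simplex-type set $\Lam$. Since $E_s\leq 4R_M$, it suffices to show that
\begin{align*}
R_M \leq NLA_d\sqrt{\frac{\pi D\log(D)}{M}}.
\end{align*}
We first swap the order of summation in \eqref{vecon}. Define $h_j := \sum_{i=1}^M g_{ij}$. These are i.i.d. $\cN(0,M)$ random variables, so $h_j = \sqrt{M}\,\xi_j$ with $\xi_j$ i.i.d. standard normal. The inner objective becomes $\sqrt{M}\sum_{j=1}^D \xi_j\lambda_{\d j}$. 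This is linear in $\lam_\d$, and $\Lam$ is a scaled simplex with vertices $\frac{ND}{A}\b{e}_j$, so the maximum is attained at a vertex:
\begin{align*}
\max_{\lam_\d\in\Lam}\sum_{j}\xi_j\lambda_{\d j} = \frac{ND}{A}\max_{1\le j\le D}\xi_j .
\end{align*}

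Next we bound the expected maximum of $D$ standard Gaussians by the standard estimate $\Ex{\max_j \xi_j}\leq\sqrt{2\log D}$. This follows from Jensen's inequality applied to $\exp(t\max_j\xi_j)$, followed by a union bound over $j$ and optimization over $t$. Substituting into \eqref{vecon} with $L_\d = LA_dA/\sqrt{D}$ gives
\begin{align*}
R_M \leq \sqrt{\frac{\pi}{2}}\,\frac{LA_dA}{\sqrt{D}}\cdot\frac{\sqrt{M}}{M}\cdot\frac{ND}{A}\sqrt{2\log D}
= NLA_d\sqrt{\frac{\pi D\log D}{M}}.
\end{align*}
Multiplying by $4$ yields \eqref{esbound}.

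There is no deep obstacle here. The main care points are bookkeeping of the constants: the factor $A$ in $L_\d$ cancels against $1/A$ in the vertex scaling, and $\sqrt{D}$ combines with $D$ to give $\sqrt{D}$. One technical point also needs attention: \eqref{vecon} should be applied to a class indexed by a set containing $\lam_\d$ in the appropriate vector form. Lemma \ref{rdlip} provides exactly the Euclidean Lipschitz property required by Maurer's theorem, with each $r(\cdot,\z_m)$ viewed as a function on $\Lam\subset\Rn^D$. Finally, the Gaussian maximal inequality requires $D\geq 2$ to be meaningful, so we state the bound for that regime.
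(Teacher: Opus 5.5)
Your proposal is correct and follows essentially the same route as the paper: combine $E_s\leq 4R_M$ from \eqref{esproof1} with the vector contraction bound \eqref{vecon} and $L_\d=LA_dA/\sqrt{D}$, collapse $\sum_i g_{ij}$ into $\cN(0,M)$ variables, reduce the maximum over $\Lam$ to $\frac{ND}{A}\max_j$, and apply the Gaussian maximal inequality $\sqrt{2M\log D}$ (which the paper cites from Massart rather than deriving). Your constant bookkeeping matches the paper's. The $D\geq 2$ caveat is harmless but unnecessary: for $D=1$ the set $\Lam$ is a single point, so $E_s=0$ and the bound holds trivially.
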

\begin{IEEEproof}
    Substituting $L_\d$ from Lemma \ref{rdlip} and using the fact that $\sum_{i=1}^M g_{ij} = \eta_j \sim \cN(0,M)$ in \eqref{vecon}, we obtain
	\begin{align}
		R_M \leq \frac{LA_dA}{M}\sqrt{\frac{\pi }{2D}}\EE\left[\max_{\lam_\d\in \Lam} \sum_{j=1}^D \eta_j \lambda_{\d j} \right]
	\end{align} 
	Since $\sum_j \eta_j\lambda_{\d j} \leq \left(\max_j{\eta_j}\right)\sum_{j=1}^D \lambda_{\d j} = \frac{ND}{A}\max_j \eta_j$, it follows from \cite[Lemma 2.3]{massart2007concentration} that 
	\begin{align*}
		R_M &\leq \tfrac{LA_dA}{M}\sqrt{\tfrac{\pi }{2D}}  \tfrac{ND}{A} \sqrt{2M\log(D)} = NLA_d\sqrt{\tfrac{\pi D \log(D)}{M}}.
	\end{align*}
	Substituting into \eqref{esproof1}, we obtain the desired result. 
\end{IEEEproof}

In summary, the bound on the sampling error is $\Ot{\sqrt{D/M}}$ where the $\cOt$ notation hides the logarithmic dependencies as well as other constants. In this bound, the $\O{1/{\sqrt{M}}}$ dependence reflects the improvement obtained from averaging
over additional user samples. Interestingly, increasing $D$ has two opposing effects:  the Euclidean Lipschitz constant $L_\d$ decreases as $\O{1/{\sqrt{D}}}$ while the Gaussian complexity term associated with $\Lam$ increases as $\Ot{D}$. Thus, finer discretization reduces the discretization error but requires more user samples to bound the sampling error.

\subsection{Bound on excess error}
Combining the bounds in Lemmas \ref{edlem} and \ref{eslem}, we obtain the total bound on the excess error for square cells as 
\begin{align*}
	E &\leq NL\sqrt{\tfrac{2A}{D}} +  4NLA_d\sqrt{\tfrac{\pi D\log(D)}{M}} =\cOt\left(\tfrac{1}{\sqrt{D}} + \sqrt{\tfrac{D}{M}}\right).
\end{align*}
where the $\cOt$ notation hides the constants including $N$, $A$, and $A_d$. As with most worst-case generalization bounds, the expression is informative when its right-hand side is small relative to the range of the performance metric. In particular, the bound reveals how the excess error scales with $D$, $M$, and $N$, and clarifies the tradeoff between its discretization and sampling components.  Roughly, the error bound would be minimum if the two terms match, i.e., when $D^2 \approx M$. Hence, for fixed $N$, $A$, and $A_d$, if we want to ensure that the excess error is at most $\epsilon$, then we should choose $D = \cOt\left(\frac{1}{\epsilon^2}\right)$ and $M = \cOt\left(\frac{1}{\epsilon^4}\right)$, ignoring the logarithmic terms. We refer to these bounds as discretization and sampling complexities of the proposed scheme, respectively. If the underlying intensity has bounded variation (cf. Remark \ref{ed2}), then the complexities improve to $D = \cOt\left(\frac{1}{\epsilon}\right)$ and $M = \cOt\left(\frac{1}{\epsilon^3}\right)$. 

Thus, the proposed formulation learns a spatial BS intensity directly from user-location samples without requiring prior knowledge or parametric estimation of $P_\z$, while the bound quantifies the number of samples needed to support a chosen spatial resolution.

\section{Optimal Deployment of BSs}\label{optim}
We now present the optimization approach to solve \eqref{pe} and discuss how the solution provides soft guidelines along with their connection to the real-world BS-deployment.

\subsection{Soft Deployment: Solving \texorpdfstring{\eqref{pe}}{(pe)}}
Since \eqref{pe} is high-dimensional, we obtain the soft-deployment using only first-order methods. Projected gradient methods are less attractive because maintaining feasibility at every iteration entails solving a Euclidean projection subproblem over the scaled simplex $\Lam$, typically using a dedicated sorting- or active-set-based procedure \cite{condat2016fast}. Instead, we propose to use the mirror ascent algorithm that yields a closed-form update. Starting at an arbitrary $\lam^{(0)}$ in the relative interior of $\Lam$ (such as the point $\lambda_{\d i}^{(0)} = N/A$ for all $i$), the mirror ascent updates at the $t$-th iteration are evaluated as
\begin{align}\label{mirup}
	\lam_\d^{(t+1)} = \arg\max_{\lam\in \Lam} \ip{\nabla \Rh(\lam_\d^{(t)})}{\lam} - \frac{1}{\eta^{(t)}}D_{\text{KL}}(\lam,\lam_\d^{(t)})
\end{align}
where $\eta^{(t)}$ is the step-size and $D_{\text{KL}}(\lam,\lam_\d^{(t)}):=\sum_{i=1}^D \lambda_i \log(\lambda_i/\lambda_{\d i}^{(t)})$ is the KL divergence. These updates can be written in closed form as in \cite{fatkhullin2024taming}.
The empirical form of $\Rh$ further motivates the use of the stochastic mirror ascent (SMA) algorithm:
\begin{align}\label{sma}
	\lambda_{\d i}^{(t+1)} = \frac{ND}{A}\frac{\lambda_{\d i}^{(t)}\exp\left(\eta^{(t)}g^{(t)}_i\right)}{\sum_{j=1}^D\lambda_{\d j}^{(t)}\exp\left(\eta^{(t)}g^{(t)}_j\right)}
\end{align}
for $1 \leq i \leq D$, where $\{g^{(t)}_i\}$ are entries of 
\begin{align}\label{sgdef}
	\g^{(t)}:=\frac{1}{\abs{\cM_t}}\sum_{m\in\cM_t}\nabla r(\lam_\d^{(t)},\z_{m})	
\end{align}
with $\cM_t$ being a random mini-batch of users at iteration $t$. 

After running for $T$ iterations, the optimization algorithm may yield a point $\blamdt$. Without additional global structure, such as concavity of $\hat{R}$, it is not possible to find a bound on the optimization error $E_o = \EE[\Rh(\lamh_\d )-\Rh(\blamdt)]$ that decreases to zero with $T$. If, however, we assume that $r(\cdot,\z)$ is relatively smooth, which follows if  $\norm{\nabla r(\lam_\d,\z)- \nabla r(\lam_\d',\z)}_{\infty}\leq L_{\nabla}\norm{\lam_\d-\lam_\d'}_1$, then existing results for nonconvex stochastic mirror ascent guarantee convergence to a stationary point of \eqref{pe} \cite{fatkhullin2024taming}. Further, Appendix \ref{sec:noise_limited} of the supplementary material shows that under the noise-limited IPPP approximation, the coverage probability is concave and yields the bound $E_o \leq N\sqrt{\frac{2\log D}{T}}$ from \cite{nemirovski2009robust}. 

Finally, since the empirical problem is generally nonconcave, the stationary point reached by mirror ascent may depend on the initialization. We initialize the BS intensity using the empirical user-density map, since it provides a data-driven starting point that already allocates greater deployment mass to regions of high user concentration. 

\subsection{From Guidelines to BS Deployment}\label{sec-deploy}
As the optimized intensity function $\blamt(\x) := \sum_{k=1}^D \bar{\lambda}_{\d k}^T \ind_{\x \in \cA_k}$ serves as a guideline for BS deployment, we use it to generate BS deployment by sampling the underlying spatial point process with optimized intensity. Here, each realization serves as a potential deployment candidate with different realized performance. To further optimize the deployment, we generate several candidates and select the one with the highest estimated performance using an independent set of user-location samples. For a realization $\phi$ of the spatial point process $\Phi_\lambda$, let $r(\phi,\z)$ denote, with a slight abuse of notation, the performance of the corresponding BS deployment for a user at $\z$, and define
$R(\phi):=\EE_{\z}[r(\phi,\z)]$. We then have 
\begin{align}
R(\lambda)
=\EE_{\phi}[R(\phi)]
=\EE_{\phi}\EE_{\z}[r(\phi,\z)]. \label{Rexp}
\end{align}
where the outer expectation is over realizations $\phi$ of $\Phi_\lambda$. 

We draw $K$ independent candidate deployments $\{\phi_k\}_{k=1}^K$ from the point process $\Phi_{\bar{\lambda}^T}$. Note that the original user-location samples $\cZ$ that were used to obtain $\blamdt$ should not also be used to select the final realization. Since $\blamdt$ was chosen specifically to perform well at these locations, reusing them may favor a deployment that performs well on $\cZ$ but performs poorly for the underlying user population. We therefore draw an independent hold-out validation set of user locations $\cZ_v: = \{\z_m^v\}_{m=1}^{M_v}$, which is used only to select the final deployment $\phihk = \phi_{\hat{k}}$, where
\begin{align}
    \hat{k} = \arg\max_{1 \leq k \leq K} \Rh(\phi_k, \cZ_v) := \frac{1}{M_v}\sum_{m=1}^{M_v} r(\phi_k,\z_m^v).\label{phihk}
\end{align}
In the next subsection, we show that for sufficiently large values of $K$ and $M_v$, the selected deployment cannot be much worse than the average performance $R(\blamt)$ of the optimized intensity function. 

\subsubsection{Placement complexity} Define the placement error as
\begin{align}
    E_\p = \EE\left[(R(\blamt) - R(\phihk))_+\right]
\end{align}
where the positive part $(a)_+ = \max\{0,a\}$ is used to exclude the realizations that perform better than the average. The following lemma quantifies the additional error incurred in obtaining a concrete deployment.
\begin{lemma}\label{eplem}
    If $r(\phi,\z) \in [0,B_r]$ for all $\phi$ and $\z$, then the placement error satisfies
    \begin{align}
        E_\p \leq \tfrac{B_r}{K} + B_r\sqrt{\tfrac{\log(K)}{2M_v}}
    \end{align}
\end{lemma}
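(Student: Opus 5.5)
The plan is to work conditionally on the optimized intensity $\blamt$, which is a function only of $\cZ$ and the algorithm's randomness. Given $\blamt$, the candidates $\phi_1,\ldots,\phi_K$ are i.i.d. draws of $\Phi_{\bar{\lambda}^T}$, and the validation set $\cZ_v$ is independent of everything else. By \eqref{Rexp}, $\mu:=R(\blamt)=\EE_\phi[R(\phi)]$, so the scalars $X_k:=R(\phi_k)\in[0,B_r]$ are conditionally i.i.d. with mean $\mu$. Let $k^\star:=\arg\max_k R(\phi_k)$ denote the oracle choice. I would split
\begin{align*}
R(\blamt)-R(\phihk) = \bigl(\mu-R(\phi_{k^\star})\bigr) + \bigl(R(\phi_{k^\star})-R(\phihk)\bigr).
\end{align*}
The second bracket is nonnegative. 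Using $(a+b)_+\leq a_++b$ for $b\geq 0$ then gives
\begin{align*}
E_\p\leq \EE\bigl[(\mu-\max_k X_k)_+\bigr] + \EE\bigl[R(\phi_{k^\star})-R(\phihk)\bigr].
\end{align*}
The two terms are bounded separately and should produce $B_r/K$ and $B_r\sqrt{\log(K)/(2M_v)}$, respectively.

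\emph{Selection term.} Since $\hat{k}$ maximizes $\Rh(\cdot,\cZ_v)$, we have $\Rh(\phi_{k^\star},\cZ_v)\leq \Rh(\phihk,\cZ_v)$. Hence
\begin{align*}
R(\phi_{k^\star})-R(\phihk)\leq \bigl(R(\phi_{k^\star})-\Rh(\phi_{k^\star},\cZ_v)\bigr) + \max_{k}\bigl(\Rh(\phi_k,\cZ_v)-R(\phi_k)\bigr).
\end{align*}
The index $k^\star$ depends only on the candidates, not on $\cZ_v$. Therefore, conditionally on $\{\phi_k\}$, the first bracket has zero mean.

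For the second bracket, each $\Rh(\phi_k,\cZ_v)-R(\phi_k)$ is, conditionally on $\{\phi_k\}$, an average of $M_v$ i.i.d. centered variables with range $B_r$. By Hoeffding's lemma it is therefore sub-Gaussian with variance proxy $\sigma^2=B_r^2/(4M_v)$. The standard maximal inequality for $K$ sub-Gaussian variables (cf.\ \cite[Lemma 2.3]{massart2007concentration}, as used in Lemma \ref{eslem}) gives expectation at most $\sqrt{2\sigma^2\log K}=B_r\sqrt{\log(K)/(2M_v)}$. The one-sided form is essential here: it avoids the $\log(2K)$ and the factor $2$ that a two-sided $\max_k\abs{\cdot}$ argument would introduce.

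\emph{Best-of-$K$ term.} This is the step I expect to be the main obstacle, because it needs a distribution-free $1/K$ rate for $\EE[(\mu-\max_k X_k)_+]$. Let $p:=\mathbb{P}(X_1<\mu)$. The quantity $(\mu-\max_k X_k)_+$ is nonzero only when all $X_k<\mu$, and on that event it is at most $(\mu-X_1)_+$. Using independence,
\begin{align*}
\EE\bigl[(\mu-\max_k X_k)_+\bigr]\leq \EE\bigl[(\mu-X_1)_+\bigr]\,p^{K-1}.
\end{align*}
Because $X_1$ has mean $\mu$, we have $\EE[(\mu-X_1)_+]=\EE[(X_1-\mu)_+]\leq (B_r-\mu)(1-p)\leq B_r(1-p)$. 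The elementary inequality $(1-p)p^{K-1}\leq \frac{1}{K}\bigl(\frac{K-1}{K}\bigr)^{K-1}\leq \frac1K$, obtained by maximizing over $p\in[0,1]$, then yields $B_r/K$.

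Taking outer expectations over $\blamt$ and adding the two bounds gives the claim.
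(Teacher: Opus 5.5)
Your proof is correct. The split via $(a+b)_+\leq a_+ + b$ and your handling of the selection term match the paper's proof: the paper also uses a zero-mean term at the oracle index $k^\star$, then Hoeffding's lemma with variance proxy $B_r^2/(4M_v)$ and the one-sided sub-Gaussian maximal inequality.

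Where you genuinely differ is the best-of-$K$ term.

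\emph{The paper's route.} It applies Markov's inequality to $B_r-\sX_k$ to get the level-wise CDF bound $F(\mu-t)\leq 1-t/B_r$ for $0\leq t\leq\mu$. It then integrates the tail of the maximum: $\int_0^\mu F(\mu-t)^K\,dt\leq\int_0^{B_r}(1-t/B_r)^K\,dt=B_r/(K+1)\leq B_r/K$.

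\emph{Your route.} You observe that the shortfall is nonzero only when all $K$ samples fall below $\mu$, and you bound its size by a single sample's shortfall. Independence then factors out $p^{K-1}$. The mean-zero identity gives $\EE[(\mu-X_1)_+]=\EE[(X_1-\mu)_+]\leq (B_r-\mu)(1-p)$, which reduces everything to maximizing $(1-p)p^{K-1}$.

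\emph{What each buys.} Your argument is more elementary, since it needs no tail-integral formula. It also gives a sharper constant, $(B_r-\mu)\frac{1}{K}\left(1-\frac{1}{K}\right)^{K-1}\approx (B_r-\mu)/(eK)$. The paper's level-wise argument instead controls $\mathbb{P}[\mu-\sX_{\max}>t]\leq(1-t/B_r)^K\leq e^{-Kt/B_r}$ for every $t\geq 0$. So it also yields high-probability bounds on the shortfall, which your expectation-only argument does not.
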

\begin{IEEEproof}
    The placement error can be split as
    \begin{align*}
        E_\p \leq & \EE[(R(\blamt) - \max_k R(\phi_k))_+] + \Ex{\max_k R(\phi_k) - R(\phihk)}
    \end{align*}
    where the two terms capture, respectively, the errors due to generating only $K$ candidate deployments and selecting $\phihk$ in \eqref{phihk} using the validation estimate $\Rh$ instead of $R$. 

    For the first term, observe that conditioned on $\blamt$, the random variables $\{R(\phi_k)\}_{k=1}^K$ are i.i.d. in $[0,B_r]$ with mean $\EE_{\phi_k}{R(\phi_k)} = R(\blamt)$. Hence, using Markov's inequality on the random variable $B_r - R(\phi_k)$ we obtain the required bound:
    \begin{align}
        \EE\left(R(\blamt)-\max_k R(\phi_k)\right)_+ \leq \tfrac{B_r}{K}\label{ep1bound}
    \end{align}
    which is proved in Appendix \ref{app_ep1} of the supplementary material. 

    In the second term, conditioned on the candidate deployments $\{\phi_k\}_{k=1}^K$, each $\Rh(\phi_k, \cZ_v)$ is the average of $M_v$ i.i.d. random variables taking values in $[0,B_r]$. Therefore, Hoeffding's lemma implies that $\Delta\Rh_k := \Rh(\phi_k, \cZ_v) - \EE_{\cZ_v}[\Rh(\phi_k, \cZ_v)]$ is sub-Gaussian with parameter $B_r^2/(4M_v)$ \cite[Sec. 1.2.1]{bach2024learning}. Hence, an expectation-of-the-maximum bound for sub-Gaussian random variables yields \cite[Prop. 1.5]{bach2024learning}: 
    \begin{align}
        \EE[\max_k \Delta\Rh_k] \leq B_r\sqrt{\tfrac{\log K}{2M_v}}. \label{expmax}
    \end{align}
    Let $k^\star = \arg\max_k R(\phi_k)$ so that $\max_kR(\phi_k) = R(\phi_{k^\star})$. Since $\phihk$ maximizes $\Rh$, we have
    \begin{align*}
    R(\phi_{k^\star})-R(\phihk)
    &\leq R(\phi_{k^\star})-\Rh(\phi_{k^\star}, \cZ_v)
    +\Rh(\phihk, \cZ_v)-R(\phihk).
    \end{align*}
    Since $k^\star$ depends only on $\{\phi_k\}$, the first term on the right has zero mean. The second term is bounded by $\EE[\max_k \Delta\Rh_k]$. Therefore from \eqref{expmax}, we obtain
\begin{align}
\EE\left[\max_{1\leq k\leq K}R(\phi_k)-R(\phihk)\right]
\leq B_r\sqrt{\frac{\log K}{2M_v}}
\end{align}
which, together with \eqref{ep1bound}, yields the desired bound.     
\end{IEEEproof}

Lemma \ref{eplem} shows that $E_\p\leq\epsilon$ can be achieved with $K=\O{1/\epsilon}$ candidate deployments and $M_v=\Ot{1/\epsilon^2}$ validation samples. Since sampling candidate deployments from $\Phi_{\blamt}$ is usually inexpensive, the validation sample complexity is generally the more demanding requirement. 

\begin{algorithm}[htb]
\caption{End-to-End Base Station Deployment}
\label{algo_e2e}
\begin{algorithmic}[1]
\State Partition $\cA$ into $D$ subregions and construct $\Lam$
\State Obtain $M$ user-location samples $\cZ$
\State Construct the empirical objective $\Rh$
\State Run updates in \eqref{sma} for $T$ iterations to obtain $\blamdt$
\State Generate $K$ candidate deployments $\{\phi_k\}_{k=1}^K$ from $\Phi_{\blamt}$
\State Obtain $M_v$ validation user-location samples
$\cZ_v$
\State Solve \eqref{phihk} and return $\phihk$
\end{algorithmic}
\end{algorithm}

\subsubsection{End-to-end performance guarantee} Algorithm \ref{algo_e2e} summarizes the complete procedure for obtaining a concrete deployment from user-location samples. The following theorem combines the earlier bounds on discretization, sampling, optimization, and placement. 
\begin{theorem}\label{thm:e2e}
Under the conditions of Lemmas \ref{edlem}, \ref{eslem}, and \ref{eplem}, the deployment returned by Algorithm \ref{algo_e2e} satisfies
\begin{align}
\EE&\left[\left(R(\lambda^\star)-R(\phihk)\right)_+\right] \nonumber\\
&\leq E_o + \Ot{\sqrt{\tfrac{A}{D}} + \sqrt{\tfrac{D}{M}} + \tfrac{1}{K} + \sqrt{\tfrac{1}{M_v}}}.
\label{e2ebound}
\end{align}
\end{theorem}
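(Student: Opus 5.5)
We will prove the bound by a telescoping decomposition of $R(\lambda^\star)-R(\phihk)$ through the chain of intermediate objects produced by Algorithm \ref{algo_e2e}. Write
\begin{align*}
R(\lambda^\star)-R(\phihk) =& \bigl(R(\lambda^\star)-R(\lamt_\d)\bigr) + \bigl(R(\lamt_\d)-R(\lamh_\d)\bigr) \\
&+ \bigl(R(\lamh_\d)-R(\blamdt)\bigr) + \bigl(R(\blamt)-R(\phihk)\bigr),
\end{align*}
using $R(\blamdt)=R(\blamt)$. We then apply $(a+b+c+d)_+\leq a_+ + b_+ + c_+ + d_+$ and take expectations. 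The first term is deterministic and equals $E_\d$, so Lemma \ref{edlem} bounds it by $\O{\sqrt{A/D}}$. The second term is nonnegative, and its expectation is $E_s$, which Lemma \ref{eslem} bounds by $\Ot{\sqrt{D/M}}$. The fourth term's expected positive part is exactly $E_\p$. Lemma \ref{eplem} bounds $E_\p$ by $\O{1/K+\sqrt{\log K/M_v}}$, where we condition on $\blamt$ and note that the validation set $\cZ_v$ and the candidates $\phi_k$ are independent of $\cZ$. We absorb $\log K$ into $\cOt$.

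The delicate step is the third term, since $E_o$ is defined in terms of the empirical objective, $E_o=\EE[\Rh(\lamh_\d)-\Rh(\blamdt)]$, while the term involves $R$. We rewrite it as
\begin{align*}
R(\lamh_\d)-R(\blamdt) =& \bigl(R(\lamh_\d)-\Rh(\lamh_\d)\bigr)+\bigl(\Rh(\lamh_\d)-\Rh(\blamdt)\bigr)\\
&+\bigl(\Rh(\blamdt)-R(\blamdt)\bigr).
\end{align*}
Since $\lamh_\d$ is the empirical maximizer, the middle bracket is nonnegative and its expectation is $E_o$. Both $\lamh_\d$ and $\blamdt$ lie in $\Lam$, so the outer brackets are bounded by $\max_{\lam_\d\in\Lam}(R-\Rh)$ and $\max_{\lam_\d\in\Lam}(\Rh-R)$, respectively. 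These suprema are nonnegative, so they also bound the positive parts. Lemma \ref{rad} combined with the $R_M$ bound from the proof of Lemma \ref{eslem} then controls each of them by $\Ot{\sqrt{D/M}}$. This keeps the argument valid even though $\blamdt$ depends on $\cZ$ and on the mini-batch randomness.

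Summing the four contributions gives $E_o+\Ot{\sqrt{A/D}+\sqrt{D/M}+1/K+\sqrt{1/M_v}}$, which is \eqref{e2ebound}.

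The main obstacle is bookkeeping across the layers of randomness ($\cZ$, the SMA mini-batches, the candidates $\phi_k$, and $\cZ_v$). The first step is to apply tower expectations with conditioning on $(\cZ,\text{SMA})$ before invoking Lemma \ref{eplem}, so that its i.i.d. hypotheses hold. A second, minor issue is that the third term is a signed quantity whose positive part must be controlled. Taking the positive part of each summand, and using the uniform-deviation decomposition above rather than $E_o$ alone, resolves this.
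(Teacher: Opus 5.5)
Your decomposition and your handling of $E_\d$, $E_s$, $E_\p$ and $E_o$ are fine. One step, however, is false as stated. You claim that $S_1:=\max_{\lam_\d\in\Lam}\bigl(R(\lam_\d)-\Rh(\lam_\d)\bigr)$ and $S_2:=\max_{\lam_\d\in\Lam}\bigl(\Rh(\lam_\d)-R(\lam_\d)\bigr)$ are each nonnegative. Only their expectations are, since $\EE[S_1]\geq\max_{\lam_\d}\EE[R(\lam_\d)-\Rh(\lam_\d)]=0$; pointwise, either one can be negative. For example, with $D=1$ the set $\Lam$ is a single point $\lam_0$. Then $R_M=0$, yet $\EE\bigl[(R(\lam_0)-\Rh(\lam_0))_+\bigr]>0$ whenever $r(\lam_0,\z)$ is nondegenerate in $\z$. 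So your intermediate bound $\EE\bigl[(R(\lamh_\d)-\Rh(\lamh_\d))_+\bigr]\leq 2R_M$ fails. The same happens for $D\geq 2$ with, e.g., $r(\lam,\z)=g(\z)+s(\lam)$. Lemma \ref{rad} controls $\EE[S_i]$, not $\EE[(S_i)_+]$, so the positive parts you placed on the third term are not covered by it.

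The repair is short, in either of two ways.

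\emph{Joint bound.} Bound the two outer brackets together: their sum is at most $S_1+S_2$, and $S_1+S_2\geq 0$ because evaluating both maxima at a common point gives zero. Hence $\EE\bigl[(R(\lamh_\d)-R(\blamdt))_+\bigr]\leq E_o+4R_M$.

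\emph{No positive parts on signed terms.} Note that $R(\lamt_\d)-R(\blamdt)\geq 0$ by optimality of $\lamt_\d$ over $\Lam$. So only the last term needs a positive part, and you only need $\EE[R(\lamh_\d)-R(\blamdt)]\leq E_o+4R_M$, which Lemma \ref{rad} gives directly.

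The paper takes the second path and streamlines it further. It never passes through $\lamh_\d$ in the $R$-chain. Instead it bounds the nonnegative quantity $R(\lamt_\d)-R(\blamdt)$ by applying \eqref{rdbound} with $\blamdt$ in place of $\lamh_\d$ and using $\Rh(\lamt_\d)\leq\Rh(\lamh_\d)$. This costs $4R_M+E_o$ with a single use of the uniform-deviation bound. Your route, once repaired, gives $8R_M+E_o$: the same order with a worse constant. The extra intermediate point $\lamh_\d$ is exactly what produced the signed term where the error crept in.
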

The result follows from the decomposition
\begin{align}
\left(R(\lambda^\star)-R(\phihk)\right)_+
&\leq R(\lambda^\star)-R(\tilde{\lambda}_\d)
+R(\lamt_\d)-R(\blamdt) \nonumber\\
&+\left(R(\blamt)-R(\phihk)\right)_+\end{align}
where the first two differences are nonnegative by the optimality of $\lambda^\star$ and $\lamt_\d$, respectively. Applying the decomposition in \eqref{rdbound} with $\blamdt$ in place of $\lamh_\d$, and using $\Rh(\lamt_\d)\leq\Rh(\lamh_\d)$, gives $\EE[R(\lamt_\d)-R(\blamdt)]\leq 4R_M+E_o$. Taking expectations and applying Lemmas \ref{edlem}, \ref{eslem}, and \ref{eplem}, together with the definition of $E_o$, proves \eqref{e2ebound}. Except for the numerical optimization error $E_o$, the other terms can be made arbitrarily small by setting $D = M_v = \Ot{1/\epsilon^2}$, $M = \Ot{1/\epsilon^4}$, and $K = \Ot{1/\epsilon}$. As stated earlier, $E_o$ can be made small in certain cases, such as when the deployment process is IPPP and the system is noise-limited. 

Beyond BS deployment, Theorem \ref{thm:e2e} applies to intensity-based spatial design whenever the point-process model and performance metric satisfy the stated regularity conditions. To the best of our knowledge, it provides the first end-to-end nonasymptotic error decomposition that incorporates discretization, finite-sample learning, and realization selection components, up to the numerical optimization error $E_o$.

\section{Optimal Deployment under Specific Performance Metrics}
We now specialize the framework to an inhomogeneous Poisson point process (PPP) $\Phi_\lambda$ on $\cA$ with intensity $\lambda$. In particular, we consider performance of a downlink cellular network at conventional sub-6 GHz frequencies under two cell association rules: instantaneous signal-to-interference-plus-noise ratio (SINR)-based association and average SINR-based association. We consider an orthogonal multiple-access downlink system model, where the BSs transmit constantly with a fixed power $P$ to a single desired mobile user on any particular time-frequency resource. As in \cite{haenggi2012stochastic}, we consider a generic path-loss function $\ell(r)$ that satisfies (a) $0 < \ell(r) \leq 1$, (b) $\ell'(r) \leq 0$, and  (c) $\int_{0}^{\infty} \ell(r)r dr  < \infty$, e.g.,  $\ell(r) = \frac{1}{1+r^\alpha/c}$ where $\alpha \in (2,5]$ is the path-loss exponent. For our analysis, we additionally require that the path-loss function does not decay faster than an exponential, i.e., there exists $L_\ell < \infty$ such that $\ell(r)e^{L_\ell r}$ becomes an increasing function of $r$.  In practice, path-loss is typically modeled using a power law rather than exponential or super-exponential functions, so $L_\ell$ tends to be small across this class of models. The average received power at the user is given by $P\ell(\norm{\x-\z})$ and the corresponding average SNR is $P\ell(\norm{\x-\z})/\sigma^2$ where $\sigma^2$ is the noise power. The interference power received by a user located at $\z$ is denoted by $I(\z)$. We consider Rayleigh fading, where the fading gain for each link is i.i.d. exponential with unit mean, in which case, the Laplace transform of $I(\z)$ is known to be \cite[Sec. III-C]{andrews2016primer}:
\begin{align}
	\cL_{I(\z)}(s) = \exp\Big(-\int_{\cR}\tfrac{\lambda(\y)\d\y}{1+\tfrac{1}{sP\ell(\norm{\y-\z})}}\Big)\label{LI}
\end{align}
where $\cR$ is the (sub-)region in $\cA$ where the interfering BSs lie. As we shall see later, the expression allows us to calculate the expectation of various functions of $I(\z)$. In particular, the signal-to-interference-plus-noise ratio (SINR) \cite{haenggi2009stochastic} at the user located at $\z$ from a BS located at $\x$ is given by 
\begin{align}
	\mathrm{SINR}_{\x} = \tfrac{GP\ell(\norm{\x - \z})}{I(\z) + \sigma^2}.
\end{align}
The mobile user located at $\z \in \cA$ is said to be in coverage if the BS (located at $\x$) selected under the association rule provides an SINR at $\z$ exceeding the threshold $\tau$ . We are interested in the performance metric coverage probability which captures the probability that a typical user is covered, i.e., $p_c(\lambda)=E_z[p_c(\lambda,\z)]= \mathbb{P}[\text{SINR}(\z)\geq \tau]$. 

\subsection{Instantaneous SINR-Based Association}\label{sec-ia}
In this association model, each mobile user connects to the BS offering the highest instantaneous SINR, considering both fading and path-loss. Interestingly, a user may not always be associated with the closest BS, since a higher instantaneous SINR may be received from a different BS. Let us denote the SINR for the user at $\z$ when it connects to a BS at $\x$ by $\text{SINR}_{\x}(\z)$, or simply by $\text{SINR}_\x$. Hence, the probability that a user at $\z$ is covered is given by 
\begin{align}
	p_c(\lambda,\z) = \Px{\max_{\x \in \Phi_\lambda} \text{SINR}_{\x} \geq \tau}\label{pci}
\end{align}
where the probability is calculated with respect to the random fading gains and the random BS point process. In order to simplify \eqref{pci}, we assume that $\tau > 1$ so that for a given user, the instantaneous SINR can exceed $\tau$ for at most a single BS, as in \cite{andrews2016primer}. Subsequent applications of the Campbell-Mecke as well as Slivnyak's theorems \cite{baccelli2010stochastic} yield
\begin{align}
	p_c(\lambda,\z) = \int_{\cA}\Px{\mathrm{SINR}_{\x} > \tau}\lambda(\x) \d\x. \label{pci2}
\end{align}
Since  $G \sim \text{Exp}(1)$, we have that 
\begin{align}
	&\Px{\mathrm{SINR}_{\x} > \tau} = \EE_{I(\z)}\left[\Px{\mathrm{SINR}_{\x} > \tau\mid I(\z)}\right] \\
	&=\EE_{I(\z)}\left[\exp\left(-\tau \frac{\sigma^2+I(\z)}{P\ell(\norm{\x-\z})}\right)\right] \\
	&=\exp\left(-\frac{\tau \sigma^2}{P\ell(\norm{\x-\z})}\right)\cL_{I(\z)}\left(\frac{\tau}{P\ell(\norm{\x-\z})}\right)\label{iSINR1}
\end{align}
where the expression for $\cL_{I(\z)}(s)$ is given in \eqref{LI} with $\cR= \cA$. Substituting \eqref{iSINR1} into \eqref{pci2}, we obtain the required coverage probability expression
\begin{align}\label{insteq}
	p_c(\lambda,\z) &= \int_{\cA}\lambda(\x)e^{-\frac{\tau \sigma^2}{P\ell(\norm{\x-\z})}-\int_{\cA} \lambda(\y)/(1+\frac{\ell(\norm{\x-\z})}{\tau\ell(\norm{\y-\z})})\d\y}\d\x \nonumber\\
	&=\int_{\cA} \lambda(\x) h(\lambda,\x) d\x
\end{align}
where we have suppressed the dependence on $\z$ and defined
\begin{align}
	h(\lambda,\x)&:=\exp\left[ - a(\x) -  \int_{\cA} \lambda(\y)b(\x,\y)d\y\right]\\
	a(\x) &:= \frac{\kappa}{\ell(d_\x)} \hspace{1cm} b(\x,\y):= \frac{\tau \ell(d_\y)}{\tau \ell(d_\y)+\ell(d_\x)}. \label{abdef}
\end{align}
with $d_\x:= \norm{\x-\z}$, $d_\y:=\norm{\y-\z}$, and $\kappa := \tau \sigma^2/P$. The following lemma, whose proof is provided in Appendix \ref{lips-proof} in the supplementary, verifies Assumption \ref{Wlip} for \eqref{insteq}. 

\begin{lemma}\label{pclip}
The coverage probability in \eqref{insteq} under instantaneous SINR association rule is $\tfrac{2}{3}L_\ell$-Lipschitz continuous with respect to the $W_1$ metric.  
\end{lemma}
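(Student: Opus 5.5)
The plan is to linearize $p_c(\cdot,\z)$ along the segment between two intensities, and then show that the resulting first variation is a Lipschitz function of $\x$ with constant at most $\tfrac23 L_\ell$. The duality formula \eqref{Wdef} then turns this into the $W_1$ bound.

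\textbf{Step 1: linearization.} Fix $\z$ and $\lambda,\lambda'\in\cS$, and set $\lambda_t=\lambda'+t(\lambda-\lambda')$ for $t\in[0,1]$. Since $\cS$ is convex, $\lambda_t\in\cS$. Because $h(\lambda,\x)$ is the exponential of an affine functional of $\lambda$, $t\mapsto p_c(\lambda_t,\z)$ is differentiable. Differentiating \eqref{insteq} under the integral sign gives
\begin{align*}
\tfrac{d}{dt}p_c(\lambda_t,\z)=\int_{\cA}\big(\lambda(\x)-\lambda'(\x)\big)\psi_t(\x)\,d\x,
\end{align*}
where
\begin{align*}
\psi_t(\x):=h(\lambda_t,\x)-\int_{\cA}\lambda_t(\u)h(\lambda_t,\u)b(\u,\x)\,d\u .
\end{align*}
Suppose every $\psi_t$ is $L'$-Lipschitz. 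Then $\pm\psi_t/L'\in\cL_1$, so by \eqref{Wdef} we get $|\tfrac{d}{dt}p_c(\lambda_t,\z)|\le L'\,W_1(\lambda,\lambda')$. Integrating over $t\in[0,1]$ then yields \eqref{Wlipeq} with $L=L'$. It therefore suffices to show $L'\le\tfrac23L_\ell$.

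\textbf{Step 2: regularity of the path-loss.} The assumption that $\ell(r)e^{L_\ell r}$ is increasing gives $\ell'(r)\ge -L_\ell\,\ell(r)$. Combined with $\ell'\le 0$, this yields $|(\log\ell)'|\le L_\ell$. If $\ell$ is not differentiable, the same statement holds as an $L_\ell$-Lipschitz bound on $\log\ell$. Since $d_\x=\norm{\x-\z}$ is $1$-Lipschitz in $\x$, it is enough to bound derivatives with respect to $d_\x$.

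The two building blocks then have simple derivatives:
\begin{align*}
\partial_{d_\x}a=-a\,(\log\ell)'(d_\x),\qquad \partial_{d_\x}b(\x,\y)=b(1-b)\,(\log\ell)'(d_\x).
\end{align*}
The same formula for $b$, with the roles of the arguments exchanged, controls $\partial_{d_\x}b(\u,\x)$ up to sign.

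\textbf{Step 3: bounding the two pieces of $\psi_t$.} Write $s(\x)=a(\x)+\int\lambda_t b(\x,\y)\,d\y\ge0$, so that $h=e^{-s}$.

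For the first piece, $|\partial_{d_\x}h|\le L_\ell\,h\,\big(a+\int\lambda_t b(1-b)\big)\le L_\ell\, s e^{-s}\le L_\ell/e$.

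For the second piece, $|\partial_{d_\x}\!\int\lambda_t(\u)h(\lambda_t,\u)b(\u,\x)d\u|\le L_\ell\int\lambda_t h\, b(1-b)\le \tfrac{L_\ell}{4}\,p_c(\lambda_t,\z)\le \tfrac{L_\ell}{4}$. This uses $b(1-b)\le\tfrac14$ and $p_c\le1$, which holds because $\lambda_t\in\cS$ and, for $\tau>1$, \eqref{insteq} is a genuine probability.

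Adding the two pieces gives $L'\le L_\ell(1/e+1/4)\approx0.618\,L_\ell\le\tfrac23L_\ell$.

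\textbf{Main obstacle.} The delicate step is the second piece. A naive bound on $\int\lambda_t h$ would grow with $N$, and the argument only closes because the integrand is exactly the coverage integrand and is therefore bounded by $1$. One also has to check the justification for differentiating under the integral, which follows from $h\le1$ and $\lambda_t\in L^1$. If the sharper constant $\tfrac23$ is meant to be attained exactly, I would try to refine Step 3 by maximizing $s e^{-s}$ jointly with the $b(1-b)$ term rather than bounding the two pieces separately.
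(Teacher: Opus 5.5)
Your proposal is correct and follows essentially the same route as the paper's proof. Both linearize along $\lambda_t$, take the first variation $h(\lambda_t,\cdot)-\int_{\cA}\lambda_t(\u)h(\lambda_t,\u)b(\u,\cdot)\,d\u$, and bound its Lipschitz constant by $L_\ell/e$ (via $we^{-w}\leq 1/e$) plus $L_\ell/4$ (via $b(1-b)\leq\tfrac14$ and $p_c(\lambda_t,\z)\leq 1$), giving $L_\ell(e^{-1}+4^{-1})<\tfrac23 L_\ell$. The remaining differences are cosmetic: you apply the $W_1$ duality at each $t$ and differentiate in $d_\x$, whereas the paper averages $\varphi_t$ first and uses $\nabla_\y$. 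Your sign on $\partial_{d_\x}b(\x,\y)$ should be negative, which is harmless since only absolute values enter. No refinement is needed at the end, since $e^{-1}+4^{-1}\approx 0.618<\tfrac23$ already gives the stated constant.
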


As an example, consider the path-loss $\ell(r) = \frac{1}{1+r^{\alpha}/c}$ for some $c > 0$. For this case, it can be seen that  
\begin{align}
	\sup_{r\geq 0}\tfrac{\abs{\ell'(r)}}{\ell(r)} \leq \left(\tfrac{(\alpha-1)^{\alpha-1}}{c}\right)^{1/\alpha} =: L_\ell
\end{align}
so that $p_c(\lambda,\z)$ is $\tfrac{2}{3}L_\ell$-Lipschitz. For $c = 1$ and $\alpha \in (2,5]$, this quantity is upper bounded by approximately 3. The expression in \eqref{insteq} can be discretized as demonstrated earlier. 

\subsubsection{Extension to multi-tier networks}
The current formulation also applies to a heterogeneous cellular network with $Q$ overlaid independent IPPP tiers of BSs, each of which may differ in terms of transmit power, deployment density, target SINR, etc. Denoting the intensity of the BSs in the $q$-th tier by $\lambda^{[q]}(\x)$, the coverage probability, with instantaneous SINR-based association, may similarly be derived as \cite{andrews2016primer}: 
\begin{align*}
	p_c(\{\lambda^{[q]}\},\z) &= \sum_{i = 1}^Q \int_{\cA} \lambda^{[i]}(\x)e^{-\tfrac{\tau_i\sigma^2}{P_i\ell(\norm{\x - \z})}-w_i(\x,\z)} \d\x \\
	w_i(\x,\z)&= \sum_{j=1}^Q \int_{\cA} \frac{\lambda^{[j]}(\y)\d\y}{1+\frac{P_i\ell(\norm{\x-\z})}{\tau_iP_j\ell(\norm{\y-\z})}}
\end{align*}
where $P_i$ and $\tau_i > 1$ are the powers and SINR thresholds for the $i$-th tier. Hence, the goal now is to maximize $\Ex{p_c(\{\lambda^{[q]}\},\z)}$ under the constraints under the prescribed mean BS counts for each tier. 
Discretizing these densities yields a $QD$-dimensional optimization problem. 

\subsection{Average SINR-Based Association}\label{sec-aa}
In this association model, the mobile user connects to the BS that offers the highest average SINR. Under the equal-power and monotone path-loss model considered here, this rule translates to associating each user with the nearest BS. For a user located at $\z$, let $\x_{\z}$ be the nearest BS and $R_{\z}:=\norm{\x_{\z}-\z}$ be the serving distance. In case the IPPP realization does not contain any BS, we set $R_\z = \infty$ and the corresponding event contributes zero coverage. Then, the coverage probability is the probability that the SINR at $\x_{\z}$ is greater than the threshold $\tau$, i.e.,
\begin{align}
	p_c^a(\lambda,\z) &= \Px{\mathrm{SINR}_{\x_{\z}} > \tau} \\
	&= \int_0^\infty \Px{\mathrm{SINR}_{\x_{\z}} > \tau \mid R_{\z} = r}f_{R_{\z}}(r)\d r \label{avSINR1}
\end{align}
where $f_{R_{\z}}(r)$ is the density of $R_{\z}$ for finite $r$.

To calculate $f_{R_{\z}}$, recall that for a PPP, the probability that no BS lies within the set $\cB(\z,r)=\{\x \in \cA \mid \norm{\x-\z}\leq r\}$ is given by 
\begin{align}
	\Px{R_{\z} > r} = \exp\left(-\int_{\cB(\z,r)} \hspace{-6mm}\lambda(\y)\d\y\right).
\end{align}
Therefore, for finite $r$, the density of $R_{\z}$ is given by $-\frac{\d}{\d r}\Px{R_{\z} > r}$, which evaluates to
\begin{align}
	f_{R_{\z}}(r) &=\exp\left(-\int_{\cB(\z,r)} \hspace{-6mm}\lambda(\y)\d\y\right) \frac{\d}{\d r}\left[\int_{\cB(\z,r)} \hspace{-6mm}\lambda(\y)\d\y\right]. \label{eq:fr_inhom}
\end{align}
The derivative in \eqref{eq:fr_inhom} can be evaluated analytically for simple regions or numerically for a general deployment region $\cA$.

Next, given $R_{\z}=r$, we have that $\mathrm{SINR}_{\x_{\z}} = \frac{GP\ell(r)}{I(\z) + \sigma^2}$. Hence the conditional probability becomes
\begin{align}
	&\Px{\mathrm{SINR}_{\x_{\z}} > \tau \mid R_{\z} = r} \nonumber\\
    &\hspace{2cm}= \EE_{I(\z)}\left[\Px{\mathrm{SINR}_{\x_{\z}} > \tau \mid R_{\z} = r, I(\z)}\right]\nonumber\\
	&\hspace{2cm}= \EE_{I(\z)}\left[\exp\left(-\tau \tfrac{\sigma^2 + I(\z)}{P\ell(r)}\right)\right] \\
	&\hspace{2cm}= \exp\left(-\tfrac{\tau \sigma^2}{P\ell(r)}\right)\cL_{I(\z)}\left(\tfrac{\tau}{P\ell(r)}\right)\label{avSINR2}
\end{align}
where the expression for $\cL_{I(\z)}(s)$ is given in \eqref{LI} with $\cR= \cA \setminus \cB(\z,r)$, since only BSs farther than the serving BS contribute interference. Hence, \eqref{avSINR2} becomes
\begin{align}\label{aveq}
	&\Px{\mathrm{SINR}_{\x_{\z}} > \tau \mid R_{\z} = r} \nonumber\\
	&=\exp\Big(-\frac{\tau \sigma^2}{P\ell(r)}-\int_{\cA\setminus \cB(\z,r)}\tfrac{\lambda(\y)\d\y}{1+\tfrac{\ell(r)}{\tau \ell(\norm{\y-\z})}}\Big)
\end{align}
which can be substituted into \eqref{avSINR1} to obtain the coverage probability. We remark that for other fading distributions as well, the distance-based association rule remains unchanged, and only the conditional probability in \eqref{avSINR2} changes. Moreover, $p_c^a(\lambda,\z)$ is $L_\ell/e$-Lipschitz
with respect to $W_1$, as proved in Appendix~\ref{lips-proof} of the supplementary material.

\subsection{Practical Applications}
We now present several additional practical applications of the proposed soft deployment framework, where the derived guidelines can be adapted to more realistic environments.

\subsubsection{Additional deployment constraints}
An important feature of the current framework is that it is relatively straightforward to incorporate physical constraints into the formulation. We discuss some common deployment constraints.

It is common to have certain areas to be off-limits for BS deployments, e.g., due to regulations or geographic limitations. Such \emph{exclusion zones} may be enforced by setting $\lambda_{\d k} = 0$ whenever $\cA_k$ is part of an exclusion zone. More generally, if the average number of BSs within a region $\cB \subset \cA$ consisting of complete grid cells is to be restricted to at most $N_B < N$, we can include the constraint
\begin{align}
	\sum\nolimits_{k: \cA_k \subset \cB} \lambda_{\d k} \leq \tfrac{N_BD}{A}.
\end{align} 
In contrast, certain high-priority regions may necessitate minimum deployment intensity requirements, which can be imposed by requiring $\lambda_{\d k} \geq \lambda_{\min}$. Finally, in multi-tier deployments, we might specifically impose constraints of the form $\sum_{q=1}^Q\lambda_{\d k}^{[q]} \leq \frac{D}{A}$ so as to ensure that each sub-region gets, on average, at most one BS of any tier. 

\subsubsection{Incremental deployment} Incremental deployment refers to the scenario in which a deterministic set $\Phi_0$ of $N_0$ BSs has already been installed and the objective is to optimize the intensity of an additional BS deployment having mean cardinality $N$. The new BS intensity $\lambda$ can help identify regions where further infrastructure is most beneficial given the coverage and interference provided by $\Phi_0$. The coverage expressions are modified to account for the superposition $\Phi_0 \cup \Phi_{\lambda}$ so that the fixed BSs participate in association and appear explicitly as interferers in the SINR expressions, while their locations remain unchanged during the optimization. The corresponding IA and AA coverage expressions are provided in Appendix~\ref{incremental} of the supplementary material.

\subsubsection{Adaptive deployment}
The proposed framework separates the deployment problem into two stages: soft deployment and concrete deployment, allowing additional site-specific constraints to be accommodated during the second stage. For instance, detailed site availability or deployment restrictions may only become known after the intensity has been optimized. If a small portion of a selected grid cell is unavailable, the BS can still be placed within the feasible part of that cell. If an entire cell becomes unavailable, its intensity mass may be redistributed among nearby feasible cells, allowing the deployment to be adjusted without solving the original optimization problem again.

\subsubsection{Sustainable deployment}
Green 6G targets can be imposed within the framework by attaching a cost density $c(\mathbf x)$ to every candidate site. The cost density may model location-specific carbon footprint, a proxy for electromagnetic exposure, or backhaul rental costs. Defining the expected aggregate cost $C(\lambda) := \int_{\cA} c(\x)\lambda(\x) d\x$, we can (a) include a soft-penalty and maximize $R(\lambda)-\beta C(\lambda)$ to explore the coverage-cost tradeoff, or (b) impose a budget on the expected cost of the form $C(\lambda)\leq C_{\max}$, which becomes a linear constraint after discretization. 

\subsubsection{Changing user demand}
The optimized intensity naturally assigns different deployment priorities to different regions. Examples include high-demand regions such as malls and stadiums. The framework can also be naturally extended to scenarios where the user distribution changes over time, by updating the user samples and re-optimizing the intensity.

\section{Numerical Results}\label{sec-results}
In this section, we evaluate the proposed approach with the instantaneous coverage probability derived in Sec. \ref{sec-ia}. A comparison of instantaneuous and average association scenarios is provided in Appendix \ref{sec:association_comparison}
of the supplementary material. We begin by visualizing the optimized intensity under different user distributions and deployment constraints (Sec. \ref{sec-spatial}). Next, we compare the performance of the proposed and baseline deployment approaches, and validate the theoretically predicted effects of discretization, finite number of user-location samples, and validation-based deployment selection (Sec. \ref{sec-theory}). Finally, we apply the proposed framework to find an optimal incremental BS deployment on top of an existing cellular deployment in the city of Chicago. 


\begin{table}[ht]
\centering
\caption{Common numerical parameters. Scenario-specific values
are stated in the corresponding subsections.}
\label{tab:numparams}
\footnotesize
\setlength{\tabcolsep}{3pt}
\renewcommand{\arraystretch}{1.05}
\begin{tabular}{lclc}
\hline
Parameter & Value & Parameter & Value\\
\hline
Deployment area $\cA$ & $10\times10$ km$^2$
    & Number of cells $D$ & $1600$\\
Path-loss exponent $\alpha$ & $3.5$
    & iterations $T$ & $100$\\
SINR threshold $\tau$ & $2$
    & Training samples $M$ & $1600$\\
Transmit power $P$ & $1$
    & Validation samples $M_v$ & $400$\\
Path-loss scale $c$ & $(0.2\,\mathrm{km})^\alpha$
    & Test samples $M_t$ & $10000$\\
Noise power $\sigma^2$ & $10^{-4}$
    & Candidates $K$ & $1000$\\
\hline
\end{tabular}
\end{table}

All experiments use the default parameters in Table~\ref{tab:numparams} unless otherwise noted. Along with the unconstrained (greenfield) deployments, we will also consider deployments under two types of  deployment constraints: exclusion zones and incremental deployments with existing BSs at known locations.  The code is provided in the supplementary material. 

The mirror-ascent updates in \eqref{mirup} are initialized at $\lambda^{(0)} = 0.99 \lambda_{u} + 0.01\lambda_{c}$ where $\lambda_u$ and $\lambda_c$ refer to the demand-proportional and uniform intensity functions, respectively. The small uniform component ensures a strictly positive initialization on the relative interior of the feasible simplex, while the demand-proportional component serves as a data-dependent warm start. For experiments involving concrete deployments, the training set is used to learn the intensity while an independent validation set is used to select a deployment, whose final performance is evaluated on a test set. 


\subsection{Optimized Intensity Maps: Guidelines and Trends}\label{sec-spatial}
We begin with several examples with different scenario/setups to demonstrate the application of the proposed soft deployment approach. The solutions of these examples  jointly illustrate how the optimized intensity changes with the user distribution and deployment constraints, providing important insights. In all figures, the color scale represents the spatial distribution of the BS intensity while the training user locations are represented by red dots. 
\begin{figure}[ht]
	\centering
	  \includegraphics[width=0.7\linewidth]{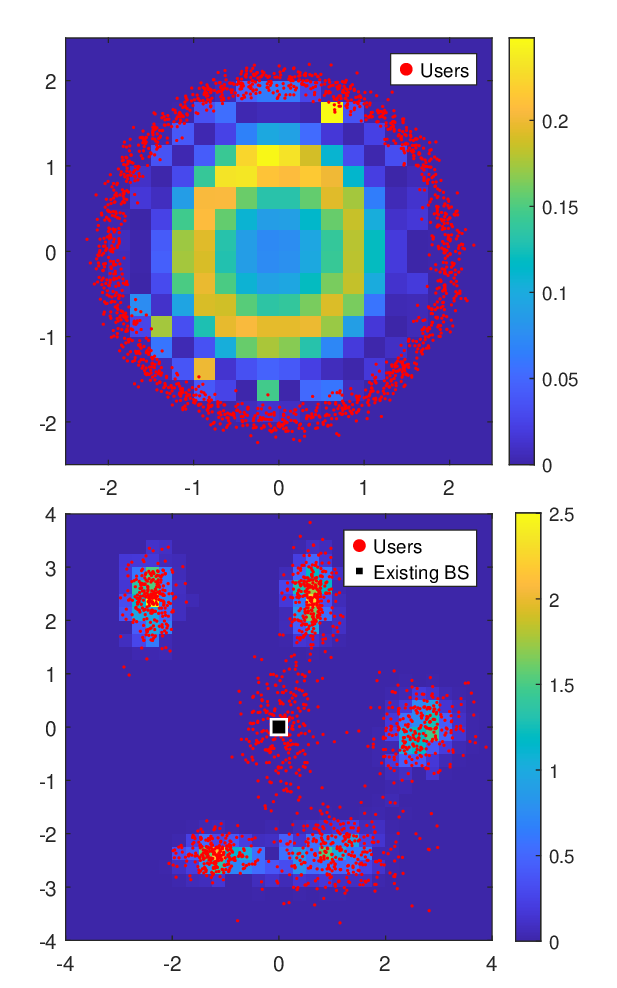}
	\caption{Optimized BS intensities for representative  user distributions. (top) Annular user-location distribution with mean $N=1$. (bottom) Six-cluster user-location distribution with one already installed BS and mean $N=5$ additional BSs.}
	\label{fig:basicmaps}
\end{figure}

We first consider a scenario with an annular user demand with $N = 1$ as shown in Figure~\ref{fig:basicmaps}(top). Observe that the optimized intensity does not simply follow the user density, but instead forms a concentric annulus of smaller radius. The inward displacement reflects a trade-off between remaining close to the individual users and providing comparable service around the entire ring. In additional runs, see Appendix~\ref{noise_sweep} of the supplementary material, we observed that the optimized annulus initially contracts towards the center but again expands towards the edges, as noise level decreases. In contrast, a hard single-BS deterministic optimizer places the BS at the center in the low-noise regime and switches to an off-center location near the user annulus as the noise increases, thereby masking the gradual radial transitions revealed by the optimized intensity. Further, due to radial symmetry of user distribution, for deterministic optimizer, any other location at the same distance from the center but at a different angle is also optimal. The optimizer picks one of these arbitrarily, again masking the equivalence of these points. Instead the soft deployment approach reveals the equivalence of points providing an additional flexibility to let us choose any BS location from these equivalent options. 

We next consider a scenario where the users form six spatial clusters, one of which is already served by an installed BS, over an $8 \times 8$ km$^2$ area, as shown in Figure~\ref{fig:basicmaps}(bottom). As expected, for $N = 5$, the optimized incremental-BS intensity is only concentrated primarily at the remaining five clusters. Thus, the intensity map also reveals how deployment priorities change with the installed network. Interestingly, the low-intensity ridge that connects the bottom two clusters suggests that they can also be served by the intermediate locations. 

\begin{figure}[ht]
	\centering
	\includegraphics[width=\linewidth, trim=8mm 5mm 0cm 2mm, clip]{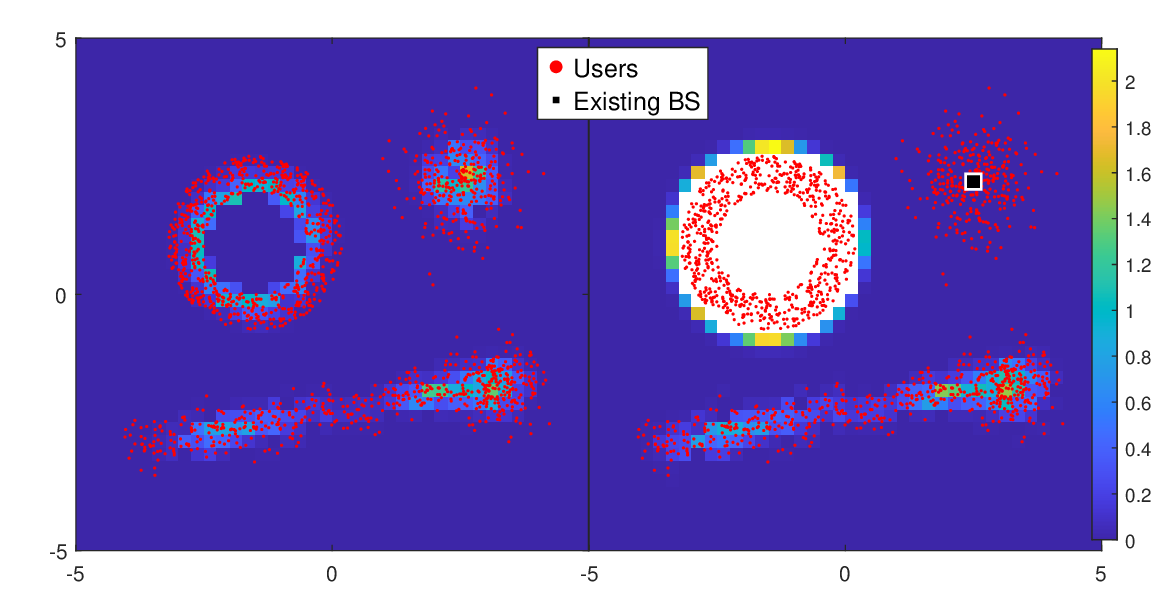}
	\caption{Optimized BS intensity for a scenario with mixed urban demand under two different planning contexts: (a) unconstrained and (b) incremental with an existing BS and an exclusion zone.}
	\label{fig:context}
\end{figure}

To further highlight that the optimized BS intensity represents deployment utility rather than just user density, we consider a mixed urban demand pattern under two different planning contexts. The user distribution is kept identical across the two cases and consists of a combination of localized demand regions, an annular venue, and a corridor. Fig.~\ref{fig:context}(a) considers a greenfield (unconstrained) deployment, whereas Fig.~\ref{fig:context}(b) introduces an incremental deployment over an existing BS as well as a circular exclusion zone. With $N = 5$, the resulting intensity maps differ substantially: the existing BS reduces the value of placing additional mass in the region it serves and the exclusion zone forces the intensity to concentrate along its outer edge. Thus, the optimized intensity captures not only where additional infrastructure is useful, but also how deployment opportunities shift in response to existing infrastructure and spatial constraints.

\subsection{Deployment performance and finite-sample behavior} \label{sec-theory}
We next evaluate the quality of the concrete BS deployments generated from the optimized intensity and compare them with four baseline deployment strategies: two randomized and two deterministic. The \emph{demand-proportional} baseline assigns the BS intensity $\lambda_u$ in proportion to the empirical (training) user count in each feasible cell, while the \emph{uniform} intensity baseline assigns constant BS intensity $\lambda_c$ to each feasible cell. Both these randomized approaches subsequently follow the same validation-based deployment procedure as outlined in Sec. \ref{sec-deploy}. The \emph{greedy} baseline sequentially adds one BS at the feasible candidate location that maximizes training coverage probability, with all previously selected (and existing) BSs contributing to the SINR calculation. Finally, in $K$-means clustering, the training users are divided into $N$ clusters and the BSs are placed at the resulting cluster centroids. In case the cluster centroids lie inside the exclusion zones, they are projected to the nearest point in the feasible region. We note that the optimal deterministic BS deployment can, in theory, be found by maximizing the training coverage probability over the set of feasible BS locations. However, the resulting problem has a non-convex non-smooth objective and non-convex constraints, making it difficult to solve. Further, such deterministic methods return a single deployment and do not directly characterize alternative viable configurations, making them less flexible under operational uncertainty. On the other hand, for randomized methods, alternative high-quality deployments can be obtained from the same optimized intensity without re-solving the problem. 



For these experiments, we consider a synthetic user-location distribution, where $30\%$ of the users are uniformly distributed over an annulus of outer radius 1.7 and thickness 0.7, centered at $(-1,-1)$ while the remaining users follow a Gaussian distribution centered at $(3,3)$. An existing BS is located at the center of the Gaussian component. We consider two cases: scarce deployment with $N = 2$ and an adequate deployment with $N = 4$. 

Recall that the candidate deployments $\{\phi_k\}_{k=1}^K$ generated by the randomized approaches, namely those involving uniform, demand-proportional, and optimized intensities, have $N$ BSs on average. However, a randomly generated deployment $\phi_k$ could contain more or fewer than $N$ BSs, making its coverage probability unsuitable for a direct comparison with the deterministic, namely greedy or $K$-means, methods that place exactly $N$ BSs. For the purpose of comparing concrete deployments under the same physical BS budget, we therefore generate deployments $\{\phi_k\}_{k=1}^K$ that have exactly $N$ BSs. This conditioning is used only for evaluation in Fig. \ref{fig:comparison} and the proposed intensity is still obtained by optimizing the original IPPP objective. Let us denote the test set of user locations by $\cZ_t: = \{\z_m^t\}_{m=1}^{M_t}$ and the empirical coverage probability of a deployment $\phi$ over the user set $\cZ = \{\z_m\}_{m=1}^{\abs{\cZ}}$ by 
\begin{align}
    \hat{p}_c(\phi,\cZ)  = \frac{1}{\abs{\cZ}}\sum\nolimits_{m=1}^{\abs{\cZ}} p_c(\phi,\z_m).
\end{align}
We consider two metrics for evaluating the randomized methods: (a) the conditional complementary cumulative distribution function (CCDF) given by $\frac{1}{K}\sum_{k=1}^K \ind\{\hat{p}_c(\phi_k,\cZ_t) \geq \rho\}$ which gives the fraction of $N$-BS deployments whose coverage probability is at least $\rho$; and (b) the best deployment test coverage probability $\hat{p}_c^{\rm sel} = \hat{p}_c(\hat{\phi}_{K},\cZ_t)$ where the deployment is selected using validation set as in \eqref{phihk}.
A higher CCDF indicates a higher probability of generating a high-quality deployment, while a higher $\hat{p}_c^{\rm sel}$ indicates that the selected deployment is indeed high-quality. 

As can be seen from Fig. \ref{fig:comparison}, the proposed approach yields a markedly more favorable distribution of deployment quality, as compared to the other randomized schemes, for both $N = 2$ and $N = 4$ cases. We also observe that for the $N = 2$ case, the best validation-selected deployments from the demand-proportional, uniform, and optimized intensities have similar coverage probabilities. This is because in the scarce-BS regime, only a few good-quality deployments possible, and sufficiently many candidate draws can yield a good deployment from any starting $\lambda$. However, as evident from the conditional CCDF curves, a large fraction of deployments resulting from the optimized $\lambda$ are high-quality. The separation is clearer in the $N = 4$ case, where the validation-selected deployment of the optimized $\lambda$ has very high coverage. As expected, the demand-proportional intensity often yields poor-quality deployments, as it does not account for the coverage and interference of the installed network. The same is true for the deterministic $K$-means algorithm, which also achieves poor performance in such settings. Interestingly, the greedy placement overcomes both these issues, but since it selects the best BS locations in a sequential manner, the placement of its first BS is often suboptimal. Further, such deployments may overfit a specific scenario, resulting in a lower performance on the test set.

\begin{figure}[ht]
	\centering
	\includegraphics[width=0.9\linewidth]{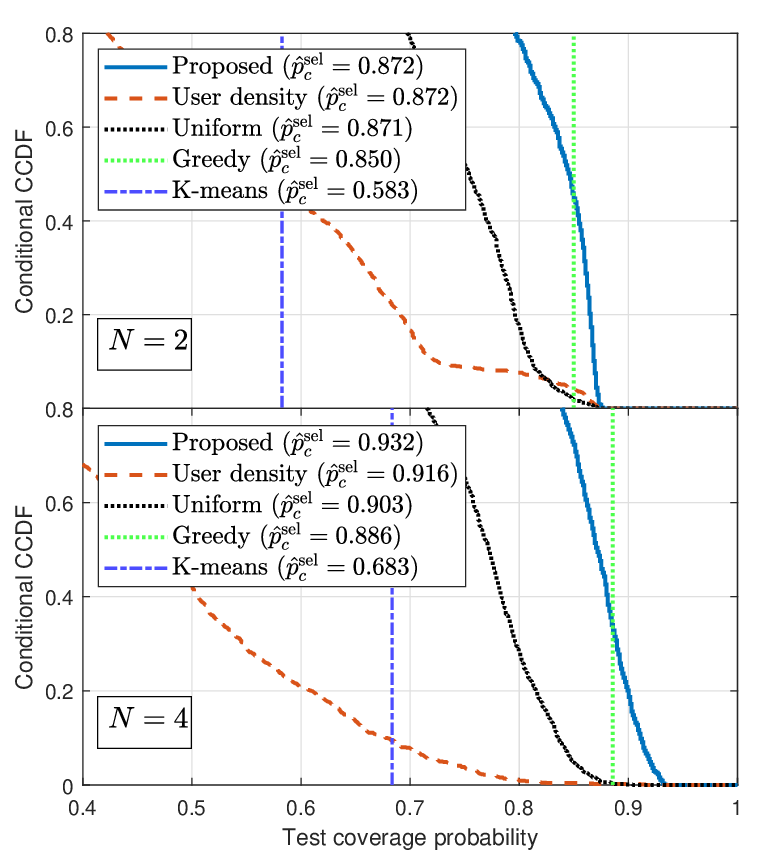}
	\caption{Empirical distribution of $p_c$ evaluated over the common test set, conditioned on the deployments having exactly $N$ BSs. The legend provides test coverage probability of the validation-selected deployment.}
	\label{fig:comparison}
\end{figure}

We next examine empirically the discretization and sampling errors bounded in Sec. \ref{sec-excess} and illustrate that their dependence on $D$ and $M$ follows the trends predicted by the bounds. All subsequent expressions are evaluated for the original mean-$N$ PPP model. Since the optimal intensities appearing in $E_d$ and $E_s$ cannot be computed exactly, we use the following  approximations.
\begin{enumerate}[leftmargin=*]
    \item For each $D$, a solution of \eqref{pe} obtained using a large $M$ approximates $\lamt_\d$. 
    \item Solution of \eqref{pe} obtained using large $M$ and $D$ approximates $\lambda^\star$.
    \item For a given $\lambda$, $p_c(\lambda)$ is estimated by averaging $p_c(\lambda,\z)$ over a large independent test set.
    \item For each $M$, the finite-sample experiment is repeated using independent training sets, so that the average of the resulting errors approximates $E_s$.
    \item Finally, in all cases, the output of the mirror-ascent algorithm is used to approximate the corresponding optimizer.
\end{enumerate}

Fig. \ref{fig:theoryDM} shows the approximate discretization and sampling errors as functions of $\sqrt{D}$ and $\sqrt{M/D}$, respectively. As expected, the discretization error decreases with $\sqrt{D}$, since a finer piecewise-constant intensity can better represent a given continuous intensity. Likewise, as expected the sampling error also decreases with $\sqrt{M/D}$, since a finer grid increases the size of the vector being estimated and hence requires more user-location samples. The trends are similar for both values of $N$. For the sampling error, error bars show 95\% confidence intervals over 10 independent user realizations. 

\begin{figure}[t]
	\centering
	\includegraphics[width=\linewidth]{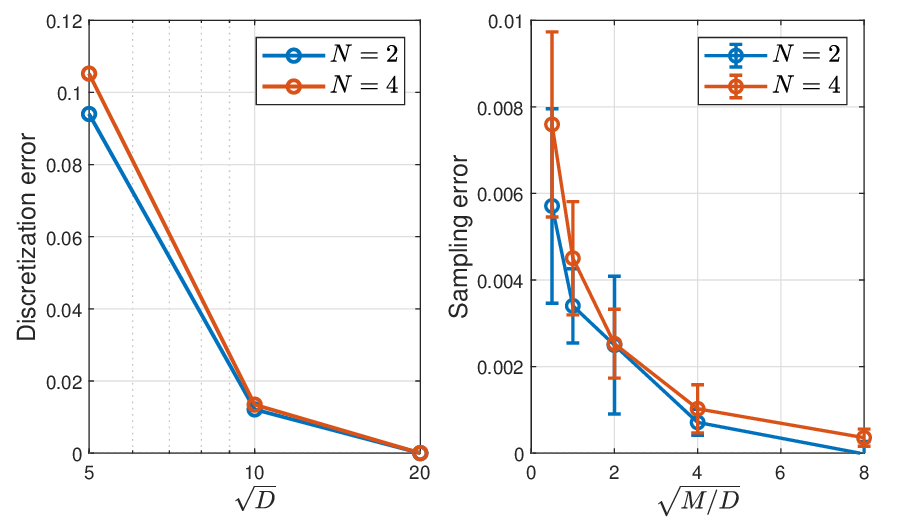}
	\caption{Discretization and sampling errors decrease inversely with $\sqrt{D}$ and $\sqrt{M/D}$, consistent with the theoretical bounds. }
	\label{fig:theoryDM}
\end{figure}

We also examine the coverage probability of the validation-selected deployment as a function of $K$ and $M_v$. Fig. \ref{fig:theoryKM} shows the coverage probability for the selected deployment, estimated by averaging over a large number of independent test user-location samples. For comparison, we also show the coverage probability obtained by oracle selection using the test set, as well as the average coverage probability $p_c(\blamt)$ of the optimized point process. We observe that as $K$ increases, the coverage improves rapidly at first and then more gradually, reflecting the benefit of generating a larger candidate set. Further, the gain from increasing $K$ is larger for $M_v = 200$ than for $M_v = 10$, since a larger validation set more reliably identifies the better deployments. The curves corresponding to different values of $M_v$ are close, suggesting that modest values of $K$ and $M_v$ may be sufficient to obtain a deployment that is better than $p_c(\bar\lambda)$ with high probability. The worst-case dependence on $K$ and $M_v$ obtained in Lemma \ref{eplem} can therefore be conservative in structured deployment problems. 

\begin{figure}[t]
	\centering
	\includegraphics[width=0.9\linewidth]{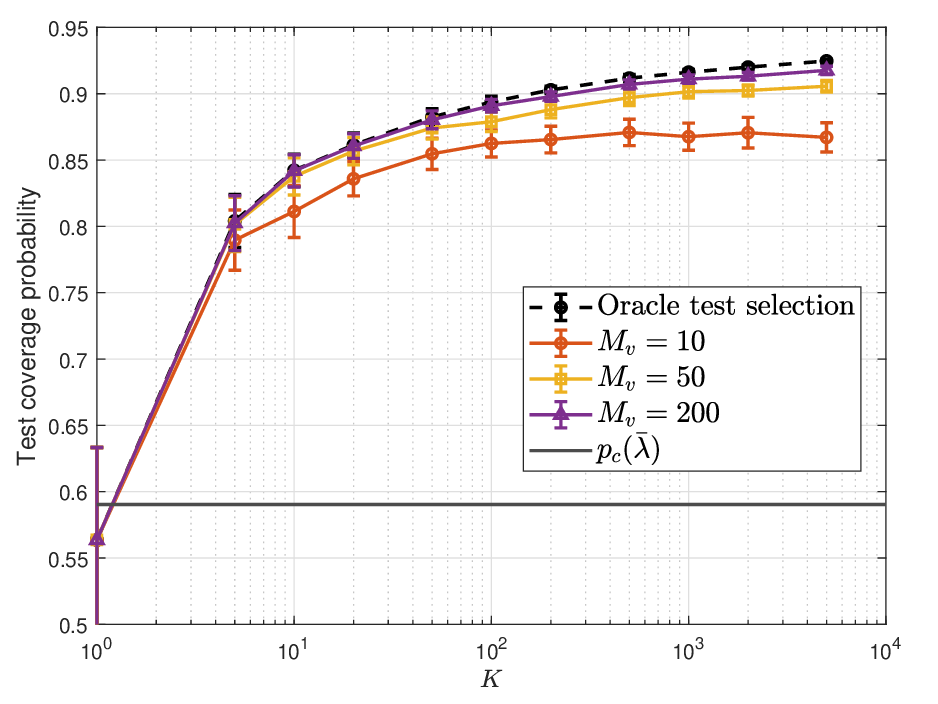}
	\caption{Test coverage probability of selected deployment for $N = 4$.}
	\label{fig:theoryKM}
\end{figure}

\subsection{Incremental deployment in Chicago}\label{sec:numchicago}
Finally, we consider incremental deployment over Chicago. For the user distribution, we utilize Facebook high resolution population density maps, which estimate resident population on approximately 30 m grid cells from satellite imagery and census data \cite{Facebook}. The resulting population map is used as a stand-in for the user-location distribution and deployment region, with cells outside its convex hull or predominantly within large water, wetland, or protected open-space regions treated as exclusion zones. Further, for existing infrastructure, we utilize AT\&T cell records obtained from OpenCellID \cite{opencellid}. Nearby records belonging to the same carrier are merged into spatial site proxies. After geographic and clustering filters, the experiment contains $56$ existing AT\&T site proxies, which are treated as existing BSs. The 14,646 user-location records are split into 70\% training, 15\% validation, and 15\% test samples.

The existing deployment is highly nonuniform and contains large spatial regions where a limited incremental deployment can be useful. Fig.~\ref{fig:chicago} shows the user demand, existing site proxies, and the optimized BS intensity for mean $N = 5$. We observe that the optimized intensity is concentrated in regions having both substantial demand and comparatively sparse existing infrastructure. The experiment demonstrates that the proposed framework can be used to obtain insights regarding incremental deployment in existing urban centers. 

\begin{figure}[t]
	\centering
	\includegraphics[width=\linewidth, trim=45mm 11mm 45mm 5mm, clip]{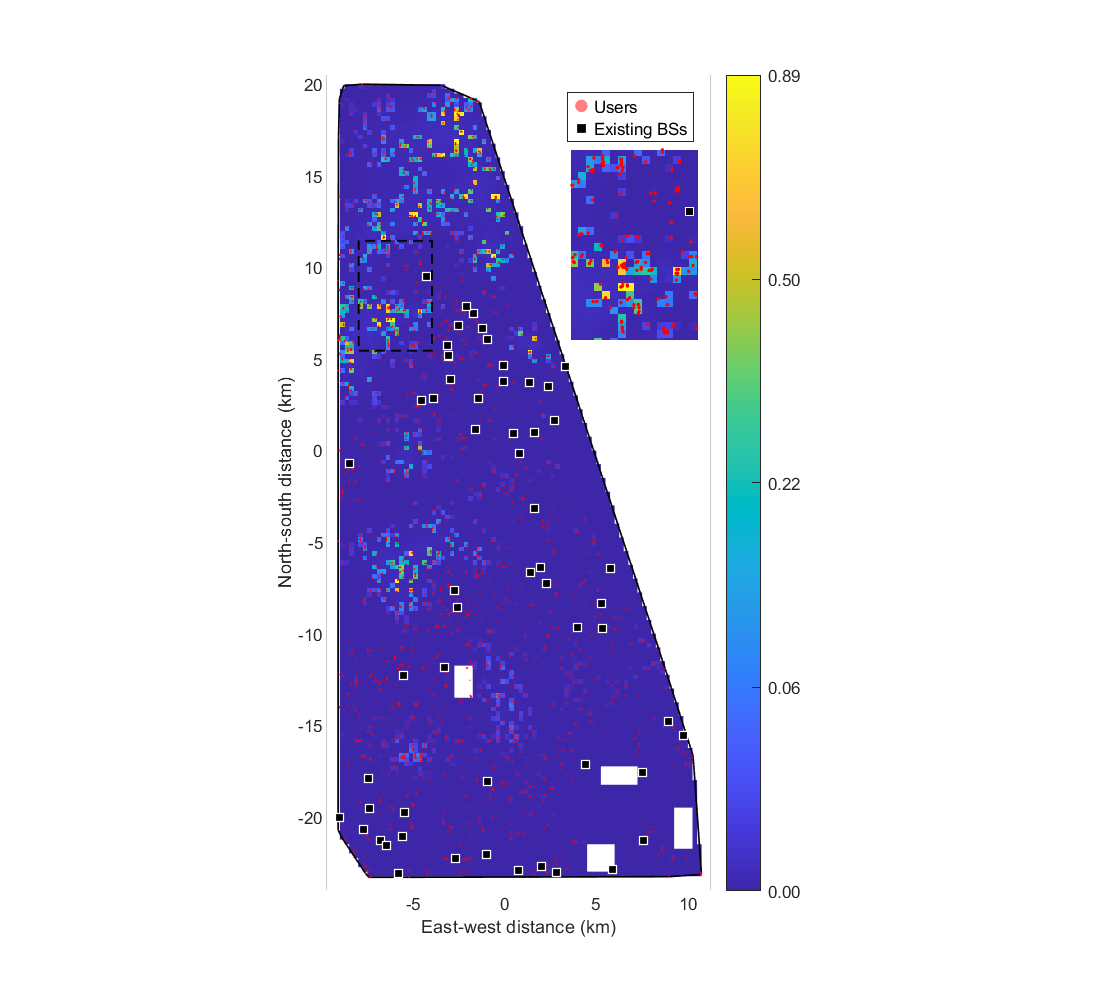}
	\caption{Optimized incremental BS intensity in Chicago with existing infrastructure and exclusion regions. The inset shows the zoomed-in version of the dotted rectangle.}
	\label{fig:chicago}
\end{figure}

\section{Conclusion}
We studied base station (BS) deployment optimization using stochastic geometric principles, where the deployment intensity is optimized using sampled user-location data. The proposed framework converts the infinite-dimensional problem into a finite-dimensional empirical problem that can be solved using standard optimization algorithms. We established bounds on the discretization and sampling errors arising from these approximations, as well as the error incurred in selecting a concrete BS deployment from the optimized intensity. The framework was applied to maximize coverage probability in downlink cellular networks under different association rules and practical deployment constraints. Numerical experiments demonstrate how the optimized intensity captures the effects of user distribution, existing infrastructure, and deployment constraints. More importantly, unlike a single deterministic deployment, the resulting intensity reveals broader spatial deployment priorities and alternative viable deployment regions, providing useful insights for cellular network planning.

\appendices
\section{Bound in Remark \ref{ed2}} \label{ed2proof}
The bound follows directly from the application of a well-known Poincaré–Wirtinger inequality \cite[Sec. 5.8.1]{muscat2024functional}, which yields:
\begin{align}
	\int_{\cA_k} \norm{\mu^\star(\x)-\mu_{\d}^\star(\x)} d\x \leq \frac{\text{diam}(\cA_k)}{2}\int_{\cA_k}\abs{\nabla \mu^\star(\x)} d\x.
\end{align}
Multiplying both sides by $N$, summing over $k = 1, \ldots, D$, and using the bound $\text{diam}(\cA_k) \leq \sqrt{\tfrac{2A}{D}}$ for square-sub-regions, we obtain
\begin{align}
	\sum_{k=1}^D \int_{\cA_k}\norm{\lambda^\star(\x) - \lambda^\star_{\d}(\x)}d\x \leq NB_\star\sqrt{\frac{A}{2D}}
\end{align}
Substituting into \eqref{wphi} and proceeding as in the proof of Lemma \ref{edlem}, we obtain
\begin{align}
	E_\d \leq \frac{LNAB_\star}{D}
\end{align}
which is the required result. 

\section{Proof of Lemma \ref{rad}}\label{radproof}
The proof follows along the lines of the symmetrization result in \cite[Sec. 4.5]{bach2024learning}. Let $\{\z_m'\}_{m=1}^M$ be an independent copy of the training sample with the same distribution as $\{\z_m\}_{m=1}^M$. Defining
\begin{align}
	\Rh'(\lam_\d )&:=\frac{1}{M}\sum_{m=1}^M r(\lam_\d,\z_m')
\end{align} 
and using $\mathbb{E}'$ to denote the expectation with respect to $\{\z_m'\}_{m=1}^M$, we can write
\begin{align}
	\max_{\lam_\d \in\Lam}\left(R(\lam_\d ) - \Rh(\lam_\d )\right) &= \max_{\lam_\d\in \Lam}\left(\Exp{\Rh'(\lam_\d )} - \Rh(\lam_\d )\right) \nonumber\\
	&\hspace{-1cm}= \max_{\lam_\d\in \Lam}\left(\Exp{\Rh'(\lam_\d )-\Rh(\lam_\d )}\right) \nonumber\\
	&\hspace{-1cm}\leq \Exp{\max_{\lam_\d\in \Lam}\left(\Rh'(\lam_\d )-\Rh(\lam_\d )\right)} \label{radproof1}
\end{align}
where we have used the convexity of the max function. Next, we use the observation that a symmetric random variable $X$ has the same distribution as $\varepsilon X$, where $\varepsilon$ is an independent Rademacher random variable. Applying it to $X = r(\lam,\z'_m)-r(\lam,\z_m)$, substituting into \eqref{radproof1}, and taking expectation with respect to $\{\z_m\}$, we obtain
\begin{align}
	&\Ex{\Exp{\max_{\lam_\d\in \Lam}\left(\Rh'(\lam_\d )-\Rh(\lam_\d )\right)}} \nonumber\\
	&=	\Ex{\max_{\lam_\d \in\Lam} \frac{1}{M}\sum_{m=1}^M \varepsilon_m\left(r(\lam_\d,\z'_m)-r(\lam_\d,\z_m)\right)} \nonumber\\
	&\leq \Ex{\max_{\lam_\d\in \Lam} \frac{1}{M}\sum_{m=1}^M \varepsilon_mr(\lam_\d,\z'_m)} \nonumber\\
	&+ \Ex{\max_{\lam_\d\in \Lam} \frac{1}{M}\sum_{m=1}^M -\varepsilon_mr(\lam_\d,\z_m)}.\label{radproof2}
\end{align}
where the expectations on the right are with respect to $\{\z_m,\z_m',\varepsilon_m\}$. Again, since $\varepsilon_mr(\lam_\d,\z'_m)$ has the same distribution as $-\varepsilon_mr(\lam_\d,\z_m)$, the last two terms in \eqref{radproof2} are the same. Hence, substituting into \eqref{radproof1}, we obtain 
\begin{align}
	&\Ex{\max_{\lam_\d \in\Lam}\left(R(\lam_\d ) - \Rh(\lam_\d )\right)} \nonumber\\
	& \leq 2\Ex{\max_{\lam_\d\in \Lam} \frac{1}{M}\sum_{m=1}^M \varepsilon_mr(\lam_\d,\z'_m)} =2R_M
\end{align}
which is the required result. The proof for the second inequality also follows similarly. 

\section{Bound in \texorpdfstring{\eqref{ep1bound}}{the bound}}\label{app_ep1}
Consider i.i.d. random variables $\{\sX_k\}_{k=1}^K$ in $[0,B_r]$ with cumulative distribution $F(x)$ and mean $\mu:=\EE\sX_k$. For any $0\leq t\leq\mu$, applying Markov's inequality to the nonnegative random variable $B_r-\sX_k$  gives \cite[(1.5)]{bach2024learning}
\begin{align}
F(\mu-t)
&=\Px{\sX_k\leq\mu-t} =\Px{B_r-\sX_k\geq B_r-\mu+t}
\nonumber\\
&\leq \tfrac{B_r-\mu}{B_r-\mu+t} \leq 1-\tfrac{t}{B_r}.
\end{align}
Since $\sX_{\max}:=\max_k\sX_k$ has distribution $F(t)^K$, the tail-integral formula yields:
\begin{align}
&\EE[(\mu-\sX_{\max})_+] =\int_0^\mu \Px{\sX_{\max}<\mu-t} dt\\
&=\int_0^\mu F(\mu-t)^K\,dt \leq \int_0^{B_r}\left(1-\frac{t}{B_r}\right)^Kdt
\leq \tfrac{B_r}{K}.
\end{align}

\section{Lipschitz Continuity of coverage probabilities in \texorpdfstring{\eqref{insteq}}{instantaneous case} and \texorpdfstring{\eqref{avSINR1}}{average case}}\label{lips-proof}
In this section we show that the coverage probabilities in \eqref{insteq} and \eqref{avSINR1} are $L_\ell(e^{-1}+4^{-1})$-Lipschitz and $L_\ell e^{-1}$-Lipschitz respectively, with respect to the $W_1$ metric, hence verifying Assumption \ref{Wlip}. The approach is similar for both the coverage probability expressions. The functional mean-value theorem gives \cite{sahoo1998mean}
\begin{align}
	&r(\lambda,\z)-r(\lambda',\z) = \int_0^1 \int_{\cA} \varphi_t(\y) (\lambda(\y)-\lambda'(\y)) d\y dt \nonumber\\
	&= \int_{\cA} \varphi(\y) (\lambda(\y)-\lambda'(\y)) d\y 
\end{align}
where $\varphi_t(\y) := \frac{dr(\lambda_t,\z)}{d\lambda}(\y)$, $\lambda_t = (1-t)\lambda'+t\lambda$, and $\varphi(\y):=\int_0^1\varphi_t(\y)dt$. If $\varphi$ is $L_\varphi$-Lipschitz, we can write $\frac{1}{L_\varphi}\varphi \in \cL_1$ and use \eqref{Wdef} to obtain
\begin{align}
	\abs{r(\lambda,\z)-r(\lambda',\z)}&\leq L_\varphi W_1(\lambda,\lambda')
\end{align}
It remains to determine $L_\varphi$ for the two cases. 

\subsection{Lipschitz continuity of \texorpdfstring{$p_c(\cdot,\z)$}{coverage probability} in \texorpdfstring{\eqref{insteq}}{instantaneous case}} We begin by calculating its functional derivative \cite[Sec. 2.2]{cassel2013variational}:
\begin{align}
	\frac{dp_c(\lambda,\z)}{d\lambda}(\y) =  h(\lambda,\y) - \int_{\cA}\lambda(\x)h(\lambda,\x)b(\x,\y)d\x. \label{pcdiff}
\end{align}
We observe from the triangle inequality that 
\begin{align}
	&\norm{\int_0^1\hspace{-2mm}\nabla_\y \varphi_t(\y)dt} \leq \int_0^1 \hspace{-2mm}\norm{\nabla_\y \varphi_t(\y)}dt \leq \sup_{t\in[0,1]}\norm{\nabla_\y \varphi_t(\y)} \nonumber\\
	&\leq \sup_{t\in[0,1]}\norm{\nabla_\y h(\lambda_t,\y)} \nonumber\\
	&+  \sup_{t\in[0,1]}\norm{\int_{\cA}\lambda_t(\x)h(\lambda_t,\x)\nabla_\y b(\x,\y)d\x} =:T_1+T_2\label{t1t2}
\end{align}
For the first term, for any $t \in [0,1]$, the triangle inequality yields
\begin{align}
	&\norm{\nabla_\y h(\lambda_t,\y)} = h(\lambda_t,\y)\norm{\nabla_{\y} a(\y) + \int_{\cA}\lambda_t(\u) \nabla_{\y} b(\y,\u)d\u} \nonumber\\
	&\leq h(\lambda_t,\y)\left(\norm{\nabla_{\y} a(\y)} + \norm{\int_{\cA}\lambda_t(\u) \nabla_{\y} b(\y,\u)d\u}\right).\nonumber
\end{align}
Recall that since $e^{L_\ell r}\ell(r)$ is increasing in $r$, we have that 
\begin{align}
	\sup_{r \geq 0}\frac{\abs{\ell'(r)}}{\ell(r)} \leq L_\ell. \label{Lb} 
\end{align}
For $\y \neq \z$, since $\norm{\nabla_{\y}d_{\y}} = 1$, we can write
\begin{align}
	\norm{\nabla_\y a(\y)} &= \frac{\kappa}{\ell^2(d_\y)}\abs{\ell'(d_\y)} \leq L_\ell\frac{\kappa}{\ell(d_\y)} = L_\ell a(\y) \\
	\nabla_\y b(\y,\u) &= -\frac{\tau \ell(d_\u)}{(\tau \ell(d_\u) + \ell(d_\y))^2}\ell'(d_\y)(\nabla_\y d_\y)
\end{align}
so using the fact that $\ell(d_\u) \geq 0$, we obtain
\begin{align}
	\norm{\int_{\cA}\lambda_t(\u)\nabla_\y b(\y,\u)d\u} \leq L_\ell \int_{\cA} \lambda_t(\u) b(\y,\u)d\u
\end{align}

Combining these and substituting $w:=a(\y) + \int_{\cA}\lambda_t(\u)b(\y,\u)d\u$, we obtain 
$\norm{\nabla_\y h(\lambda_t,\y)}
\leq L_\ell w e^{-w} \leq \frac{L_\ell}{e}$. Hence the bound on $T_1$ becomes $T_1 \leq  \frac{L_\ell}{e}$. 
For the second term in \eqref{t1t2}, observe that 
\begin{align}
	\nabla_\y b(\x,\y) = \frac{\tau \ell(d_\x)\ell'(d_\y)}{(\tau \ell(d_\y) + \ell(d_\x))^2}(\nabla_\y d_\y)
\end{align}
so using $\norm{\nabla_{\y}d_{\y}} = 1$ for $\y \neq \z$  and the inequality $4ab \leq (a+b)^2$, we have the bound
\begin{align}
	&\norm{\int_{\cA}\lambda_t(\x)h(\lambda_t,\x)
\nabla_\y b(\x,\y)d\x} \nonumber\\
&\leq \frac{\abs{\ell'(d_\y)}}{\ell(d_\y)}\int_{\cA}\lambda_t(\x)h(\lambda_t,\x)\frac{\tau \ell(d_\x)\ell(d_\y)}{(\tau \ell(d_\y) + \ell(d_\x))^2}d\x \\
	&\leq \frac{L_\ell}{4} \int_{\cA}\lambda_t(\x)h(\lambda_t,\x)d\x = \tfrac{1}{4}L_\ell p_c(\lambda_t,\z) \leq L_\ell/4
\end{align}
for $t\in [0,1]$. Therefore we have that $T_2 \leq L_\ell/4$. Adding the bounds on $T_1$ and $T_2$, we can take $L_\varphi =L_\ell(e^{-1}+4^{-1}) < 2L_\ell/3$ and hence $p_c(\lambda,\z)$ is $\tfrac{2}{3}L_\ell$-Lipschitz in $\lambda$ with respect to the $W_1$ metric.

\subsection{Lipschitz continuity of \texorpdfstring{$p_c^a(\cdot,\z)$}{coverage probability} in \texorpdfstring{\eqref{avSINR1}}{average case}}
Defining 
\begin{align}
    p(r,\z):=\int_{\cB(\z,r)}\lambda(\y)\d\y = \int_{\cA}\lambda(\y)\ind_{\{d_\y \leq r\}}\d\y \label{prz}
\end{align}
the coverage probability in \eqref{avSINR1} takes the form
\begin{align}
    p_c^a(\lambda,\z) = \int_0^\infty h^a(\lambda,r) e^{-p(r,\z)}\frac{dp(r,\z)}{\d r}\d r
\end{align}
where $d_\y:=\norm{\y-\z}$ and the expression for $h^a(\lambda,r)$ is provided in \eqref{aveq}. Note that for a fixed $\y\in \cA$, 
\begin{align}
    \frac{\d p(r,\z)}{\d\lambda}(\y) &= \ind_{\{d_\y\leq r\}} & \frac{\d}{\d\lambda(\y)}\left[\frac{\d p(r,\z)}{\d r}\right] &=\delta(r-d_\y),
\end{align}
where $\delta(\cdot)$ denotes the Dirac delta function. Hence, the functional derivative of $p_c^a$ is given by 
\begin{align}
&\frac{\d p_c^a(\lambda,\z)}{\d\lambda}(\y)
= -\int_0^\infty h^a(\lambda,r)e^{-p(r,\z)} \frac{\ind_{\{d_\y>r\}}}
{1+\frac{\ell(r)}{\tau\ell(d_\y)}} \frac{\d p(r,\z)}{\d r}\d r
\nonumber\\
&\hspace{1cm}- \int_0^\infty h^a(\lambda,r)e^{-p(r,\z)} \ind_{\{d_\y\leq r\}} \frac{\d p(r,\z)}{\d r}\d r \nonumber\\
&\hspace{1cm}+
\int_0^\infty h^a(\lambda,r)e^{-p(r,\z)} \delta(r-d_\y)\d r \\
&=-\int_0^{d_\y} \frac{h^a(\lambda,r)}{1+\frac{\ell(r)} {\tau\ell(d_\y)}} f_{R_\z}(r)\d r - \int_{d_\y}^{\infty} h^a(\lambda,r)f_{R_\z}(r)\d r\nonumber\\ 
&\hspace{1cm}+ h^a(\lambda,d_\y)e^{-p(d_\y,\z)} =: \varphi^a(\y)
\label{eq:aa_functional_derivative_final}
\end{align}
Since all the terms in $\varphi^a(\y)$ depend on
$\y$ only through $d_\y$, let $s:=d_\y$, so that it suffices to use the fact that $\abs{d_\y - d_{\y'}}\leq \norm{\y-\y'}$ and only bound  $\abs{\frac{d}{ds}\varphi_t^a(s)}$ where $\varphi_t^a(s)$ is defined in \eqref{eq:aa_functional_derivative_final} with $\lambda$ replaced by
$\lambda_t =(1-t)\lambda'+t\lambda$. With a slight abuse of
notation, define
\begin{align}
a(s) &:= \frac{\kappa}{\ell(s)} & b(r,s)&:=\frac{\tau\ell(s)}{\tau\ell(s)+\ell(r)},
\end{align}
which are simply \eqref{abdef} when the serving and interfering BSs are
at distances $r$ and $s$, respectively. Then
\begin{align}
\varphi_t^a(s)
={}&h^a(\lambda_t,s)e^{-p_t(s,\z)}
-\int_0^s h^a(\lambda_t,r)b(r,s)f_{R_\z,t}(r)\d r
\nonumber\\
&-\int_s^\infty h^a(\lambda_t,r)f_{R_\z,t}(r)\d r
\label{eq:aa_varphi_s}
\end{align}
where $p_t$ is defined as in \eqref{prz} and $f_{R_\z,t}$ in \eqref{eq:fr_inhom} with $\lambda$ replaced with $\lambda_t$. Hence, we obtain 
\begin{align}
&h^a(\lambda_t,s)e^{-p_t(s,\z)} = \nonumber\\
&\exp\left(-a(s)-p_t(s,\z)-\int_{\cA\setminus\cB(\z,s)}\hspace{-1cm}\lambda_t(\u)b(s,d_\u)\d\u\right).
\label{eq:aa_ht}
\end{align}
Using the Leibniz rule and the fact that $b(s,s)=\tau/(1+\tau)$, we obtain 
\begin{align}\label{eq:aa_ht_derivative}
&\frac{\d}{\d s} \left[h^a(\lambda_t,s)e^{-p_t(s,\z)}\right] = -h^a(\lambda_t,s)e^{-p_t(s,\z)}  \times \\ 
&\left[ a'(s) +\int_{\cA\setminus\cB(\z,s)} \lambda_t(\u) \frac{\partial b(s,d_\u)}{\partial s}\d\u +\frac{1}{1+\tau}\frac{\d p_t(s,\z)}{\d s} \right].\nonumber
\end{align}
Likewise, differentiating the two integrals in
\eqref{eq:aa_varphi_s} gives
\begin{align}
&\frac{\d}{\d s} \left[ \int_0^s h^a(\lambda_t,r)b(r,s)f_{R_\z,t}(r)\d r \right]
\nonumber\\
&\hspace{2cm}= \frac{\tau}{1+\tau} h^a(\lambda_t,s)e^{-p_t(s,\z)} \frac{\d p_t(s,\z)}{\d s} \nonumber\\
&\hspace{2cm}+ \int_0^s h^a(\lambda_t,r) \frac{\partial b(r,s)}{\partial s} f_{R_\z,t}(r)\d r
\label{eq:aa_int1_derivative}\\
&\frac{\d}{\d s} \left[ \int_s^\infty h^a(\lambda_t,r)f_{R_\z,t}(r)\d r \right] 
\nonumber\\
&\hspace{2cm}= -h^a(\lambda_t,s)e^{-p_t(s,\z)} \frac{\d p_t(s,\z)}{\d s}.
\label{eq:aa_int2_derivative}
\end{align}
Substituting \eqref{eq:aa_ht_derivative}--\eqref{eq:aa_int2_derivative}
into the derivative of \eqref{eq:aa_varphi_s}, the terms containing
$\frac{\d p_t(s,\z)}{\d s}$ cancel, and we obtain
\begin{align}
(\varphi_t^a)'(s)=& -h^a(\lambda_t,s)e^{-p_t(s,\z)} \left[
a'(s) +\int_{\cA\setminus\cB(\z,s)} \hspace{-1cm}\lambda_t(\u) \frac{\partial b(s,d_\u)}{\partial s}\d\u \right] \nonumber\\
&- \int_0^s h^a(\lambda_t,r) \frac{\partial b(r,s)}{\partial s}
f_{R_\z,t}(r)\d r. \label{eq:aa_varphi_derivative}
\end{align}
whose terms can be bounded using the same arguments as those used for $T_1$ and $T_2$ in the instantaneous-association proof. Specifically, defining
\begin{align}
w:=a(s)+ \int_{\cA\setminus\cB(\z,s)} \lambda_t(\u)b(s,d_\u)\d\u,
\end{align} 
we have $h^a(\lambda_t,s)e^{-p_t(s,\z)}\leq e^{-w}$, and therefore
the magnitude of the first term in \eqref{eq:aa_varphi_derivative}
is bounded as in $T_1$ by $L_\ell we^{-w}\leq \frac{L_\ell}{e}$.  For the second term, we can likewise bound $-\frac{\partial b(r,s)}{\partial s} \leq \frac{L_\ell}{4}$ so that
\begin{align}
-\int_0^s h^a(\lambda_t,r) \frac{\partial b(r,s)}{\partial s} f_{R_\z,t}(r)\d r \leq \frac{L_\ell}{4}p_c^a(\lambda_t,\z) \leq \frac{L_\ell}{4}. \nonumber
\end{align}
Different from the earlier proof, the first term in \eqref{eq:aa_varphi_derivative} is nonpositive, while the second is nonnegative, and hence $-\frac{L_\ell}{e} \leq\varphi_t'(s)\leq\frac{L_\ell}{4}$ which implies that we can take $L_\varphi = \frac{L_\ell}{e}$. In conclusion, $p_c^a(\lambda,\z)$ is $L_\ell/e$-Lipschitz with respect to the $W_1$ metric. 

\section{Incremental Deployment}\label{incremental}
As earlier, let $\Phi_0$ denote the set of existing BSs. Under the instantaneous-SINR association rule, a user $\z$ may be covered either through one of the installed BSs $\u \in \Phi_0$ with probability 
\begin{align}
    p_c^1 = \sum_{\u\in\Phi_0} h(\lambda,\u)\prod_{\v \in \Phi_0 \setminus \{\u\}}(1-b(\u,\v))\label{pc1-incre}
\end{align}
or through one of the newly deployed BSs with probability
\begin{align}
    p_c^2 = \int_{\cA} \lambda(\x)h(\lambda,\x)\prod_{\u\in\Phi_0}(1-b(\x,\u))d\x.\label{pc2-incre}
\end{align}
Since $\tau > 1$, at most one BS can satisfy the SINR threshold. Therefore, the total coverage probability is given by $p_c^1+p_c^2$. Coverage probability of a specified set $\Phi_0$ of BSs can likewise be obtained by substituting $\lambda \equiv 0$ in \eqref{pc1-incre}. 

Under average-SINR association, the serving BS is the nearest BS to the user. Let $\u_\z$ denote the existing BS that is closest to $\z$ and $r_\u := \norm{\u_\z - \z}$. Recall that $\x_\z$ is the newly deployed BS closest to $\z$ and $R_\z = \norm{\x_\z - \z}$. There are two distinct possibilities: $R_\z \geq r_\u$ so that the user is served by an existing BS, and $R_\z < r_\u$ so that the user is served by a new BS. The contribution to the coverage probability from the first case is
\begin{align}
    p_c^1 &= \prod_{\v \in \Phi_0 \setminus \{\u_\z\}} (1-b(\u_\z,\v))\\
    &\times \exp\left(-a(\u_\z) -\int_{\cB(\z,r_\u)}\hspace{-6mm}\lambda(\y)\d\y -\int_{\cA\setminus\cB(\z,r_\u)}\hspace{-8mm}\lambda(\y)b(\u_\z,\y)\d\y\right)\nonumber
\end{align}
where $a(\cdot)$ is defined in \eqref{abdef}. When a newly deployed BS serves $\z$, the coverage probability is given by 
\begin{align}
    p_c^2 &= \int_{0}^{r_\u} \exp\left(-\frac{\tau\sigma^2}{P\ell(r)}-\int_{\cA\setminus\cB(\z,r)}\frac{\lambda(\y)\d\y}{1+\frac{\ell(r)}{\tau\ell(\norm{\y-\z})}}\right) \nonumber\\
&\qquad\times \prod_{\v\in\Phi_0} \frac{\ell(r)}{\ell(r)+\tau\ell(\norm{\v-\z})} f_{R_{\z}}(r)\d r.
\end{align}
Finally, the total coverage probability is given by $p_c^1+p_c^2$. 

\section{Convexity in the Noise-Limited Case}
\label{sec:noise_limited}
In this section, we show that $p_c(\lambda)$ for IPPP deployment is concave when the network is noise-limited, i.e., when the aggregate interference power $I(\z)$ is much smaller than the noise power $\sigma^2$. 

\begin{lemma}\label{lem:noise_limited}
In the noise-limited case, the coverage probability expressions in \eqref{insteq} and \eqref{avSINR1} are both concave in $\lambda$. 
\end{lemma}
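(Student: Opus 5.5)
In the noise-limited regime we drop the interference contribution, i.e., we set the Laplace-transform factor $\cL_{I(\z)}(\cdot)\equiv 1$ in \eqref{iSINR1} and \eqref{avSINR2}. This amounts to deleting the terms involving $b(\cdot,\cdot)$ in \eqref{insteq} and \eqref{aveq}. The plan is to show that, for each fixed $\z$, the resulting $p_c(\cdot,\z)$ and $p_c^a(\cdot,\z)$ are concave functionals of $\lambda$ on $\cS$. Concavity of $R(\lambda)=\EE_\z[r(\lambda,\z)]$, and of its empirical version $\Rh$, then follows because expectations and averages of concave functionals are concave. This concavity is also what the bound on $E_o$ quoted from \cite{nemirovski2009robust} requires.

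\emph{Instantaneous association.} With the interference term removed, \eqref{insteq} becomes
\begin{align*}
p_c(\lambda,\z)=\int_{\cA}\lambda(\x)e^{-a(\x)}\d\x .
\end{align*}
This is a linear functional of $\lambda$, since $a(\x)=\kappa/\ell(d_\x)$ does not depend on $\lambda$. A linear functional is concave, so this case is immediate.

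\emph{Average association.} Set $g(r):=e^{-\kappa/\ell(r)}$. Because $\ell$ is nonincreasing, $g$ is nonincreasing, so $g'(r)\le 0$. Dropping interference in \eqref{aveq} and substituting into \eqref{avSINR1} together with \eqref{eq:fr_inhom} gives
\begin{align*}
p_c^a(\lambda,\z)=\int_0^{A_d} g(r)\Big(-\tfrac{\d}{\d r}e^{-p(r,\z)}\Big)\d r ,
\end{align*}
where $p(r,\z)$ is defined in \eqref{prz}. The range can be truncated at $A_d$ because every $\y\in\cA$ satisfies $d_\y\le A_d$ (as $\z\in\cA$), so $p(r,\z)=N$ for all $r\ge A_d$. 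Integrating by parts yields
\begin{align*}
p_c^a(\lambda,\z)=g(0)-g(A_d)e^{-N}+\int_0^{A_d} g'(r)\,e^{-p(r,\z)}\d r .
\end{align*}
The first two terms do not depend on $\lambda \in \cS$. For each fixed $r$, the map $\lambda\mapsto p(r,\z)$ is linear, so $e^{-p(r,\z)}$ is convex in $\lambda$. Multiplying by $g'(r)\le 0$ makes the integrand concave in $\lambda$ pointwise in $r$. Integrating over $r$ preserves concavity, hence $p_c^a(\cdot,\z)$ is concave.

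The main obstacle is to justify the integration by parts. The function $p(\cdot,\z)$ is only absolutely continuous in $r$, with derivative given by the boundary integral in \eqref{eq:fr_inhom}. The plan is to write the integral as a Lebesgue--Stieltjes integral against the nondecreasing function $r\mapsto 1-e^{-p(r,\z)}$, which is the distribution of $R_\z$ including its possible atom at $\infty$. This formulation needs no differentiability of $\lambda$.

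Second, one must fix $g$ at the endpoints. If $\ell(0)<\infty$, then $g(0)$ is a finite constant. The event $R_\z=\infty$ contributes zero coverage, consistent with the convention stated after \eqref{avSINR1}; this contribution is exactly what the constant boundary term accounts for.

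Once these points are handled, concavity follows from the pointwise composition argument above, without any further computation.
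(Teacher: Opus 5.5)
Your instantaneous-association case has a genuine error: deleting the $b$-terms from \eqref{insteq} does not give the noise-limited coverage probability. Formula \eqref{insteq} is $\int_{\cA}\lambda(\x)\,\mathbb{P}[\mathrm{SINR}_{\x}>\tau]\,\d\x$, which is the \emph{mean number} of BSs above threshold. It equals $\mathbb{P}[\max_{\x}\mathrm{SINR}_{\x}>\tau]$ only because, for $\tau>1$, at most one BS can exceed the threshold. That argument needs every competing BS to appear in the others' interference.

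Once interference is dropped, several BSs can have $\mathrm{SNR}>\tau$ at the same time. Then $\int_{\cA}\lambda(\x)e^{-a(\x)}\d\x$ is no longer a probability. For example, concentrating mass $N=2$ in a small cell containing $\z$, with $\kappa$ small, makes it close to $2$. The correct noise-limited coverage probability, which the paper uses, is the complement of the void probability of a thinned PPP. That PPP consists of the BSs with $GP\ell(d_{\x})/\sigma^2>\tau$, and its intensity is $\lambda(\x)e^{-a(\x)}$:
\[
p_c(\lambda,\z)=1-\exp\Big(-\int_{\cA}\lambda(\x)e^{-a(\x)}\d\x\Big).
\]
The repair is one line: $u\mapsto 1-e^{-u}$ is concave and $u$ is linear in $\lambda$, so the composition is concave. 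As written, however, you prove concavity of a different functional, namely a linear upper bound on the coverage probability, not the coverage probability the lemma is about.

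Your average-association argument is correct and is the paper's argument with different bookkeeping. The paper writes $p_c^a(\lambda,\z)=\int_0^\infty \mathbb{P}[R_{\z}\le r]\,\d[-g(r)]$. Here $\mathbb{P}[R_{\z}\le r]=1-e^{-p(r,\z)}$ is concave in $\lambda$ and $\d[-g]\ge 0$. The boundary term at infinity vanishes because $\ell(r)\to 0$ gives $g(\infty)=0$. Your version truncates at $A_d$, keeps the constant $g(0)-g(A_d)e^{-N}$, and puts the nonpositive weight $g'(r)$ on the convex $e^{-p(r,\z)}$; this is equivalent. Your Lebesgue--Stieltjes remark adds a bit of care that the paper omits.
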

\begin{IEEEproof}
In the instantaneous-SINR association, the noise-limited coverage probability is given by \cite[Sec.~II]{andrews2016primer}:
\begin{align}
p_c(\lambda,\z) = 1-\exp\left(-\int_{\cA}\lambda(\x)e^{-a(\x)}\d\x\right)
\label{eq:noise_ia}
\end{align}
which is concave since the function $1-e^{-x}$ is concave in $x$ and the integral is linear in $\lambda$. 

In the average-SINR association, the noise-limited version of \eqref{avSINR1} becomes 
\begin{align}
p_c^a(\lambda,\z) &= \int_0^\infty \exp\left(-\frac{\tau\sigma^2}{P\ell(r)}\right)f_{R_{\z}}(r)\d r
\label{eq:noise_aa} \\
&= \int_0^\infty\Px{R_{\z}\leq r} \d\left[-\exp\left(-\frac{\tau\sigma^2}{P\ell(r)}\right)\right]
\label{eq:noise_aa_concave}
\end{align}
where $\Px{R_{\z}\leq r} = 1-\exp\left(-\int_{\cB(\z,r)}\lambda(\y)\d\y\right)$ is concave in $\lambda$ and the measure $\d\left[-\exp\left(-\frac{\tau\sigma^2}{P\ell(r)}\right)\right]$  is nonnegative. Hence,
$p_c^a(\lambda,\z)$ is a nonnegative weighted integral of concave functions and is therefore concave in $\lambda$.

Finally,  expectation over $\z$ and empirical averaging preserve
concavity, and hence the coverage probability is concave in both the cases. 
\end{IEEEproof}

In the noise-limited case therefore, mirror ascent converges to the global optimum and $E_o$ can be made arbitrarily small. To obtain the bounds in closed-form however, we must first establish that the discretized gradients of $p_c(\lam_\d,\z)$ and $\p_c^a(\lam_\d,\z)$ have bounded components. 

\begin{lemma}
    In the noise-limited case, it holds that 
    \begin{align}
        \norm{\nabla p_c(\lam_\d,\z)}_\infty &\leq \tfrac{A}{D} & \norm{\nabla p_c^a(\lam_\d,\z)}_\infty &\leq \tfrac{A}{D}
    \end{align}
    where the expressions for $p_c$ and $p_c^a$ are provided in \eqref{insteq} and \eqref{avSINR1}, respectively. 
\end{lemma}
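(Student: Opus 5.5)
Both bounds follow from the chain rule applied to the discretization map. For $\lambda_\d(\x)=\sum_k\lambda_{\d k}\ind_{\x\in\cA_k}$, the partial derivative with respect to $\lambda_{\d k}$ of any functional $r(\lambda_\d,\z)$ is
\begin{align*}
\frac{\partial r(\lam_\d,\z)}{\partial \lambda_{\d k}}=\int_{\cA_k}\frac{\d r(\lambda_\d,\z)}{\d\lambda}(\y)\,\d\y .
\end{align*}
It therefore suffices to show that in the noise-limited case the functional derivative lies in $[0,1]$ pointwise, or at least has magnitude at most $1$. The bound $\abs{\cA_k}=A/D$ then gives the claim.

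\textbf{Instantaneous association.} Using \eqref{eq:noise_ia}, we differentiate directly:
\begin{align*}
\frac{\d p_c(\lambda,\z)}{\d\lambda}(\y)=e^{-a(\y)}\exp\Big(-\int_{\cA}\lambda(\x)e^{-a(\x)}\d\x\Big).
\end{align*}
Both factors lie in $(0,1]$ because $a\geq 0$ and $\lambda\geq 0$. Integrating over $\cA_k$ gives $0\leq \partial p_c/\partial\lambda_{\d k}\leq A/D$.

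\textbf{Average association.} Using the representation \eqref{eq:noise_aa_concave}, write $g(r):=\exp(-\tau\sigma^2/(P\ell(r)))$, which is nonincreasing in $r$ with values in $[0,1]$. The functional derivative of $\Px{R_\z\leq r}=1-\exp(-p(r,\z))$ is $\ind_{\{d_\y\leq r\}}e^{-p(r,\z)}\in[0,1]$. Differentiating under the Stieltjes integral, which is justified since $\d[-g]$ is a nonnegative measure of total mass at most $g(0)\leq 1$, gives
\begin{align*}
0\leq\frac{\d p_c^a(\lambda,\z)}{\d\lambda}(\y)=\int_{d_\y}^\infty e^{-p(r,\z)}\,\d[-g(r)]\leq g(d_\y)\leq 1.
\end{align*}
Integrating over $\cA_k$ yields the second bound.

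\textbf{Main obstacle.} The step requiring care is the average-association case: rigorously interchanging the functional derivative with the $r$-integral, and handling the boundary behaviour of $g$ as $r\to\infty$ together with the event that no BS exists. Both are dealt with by the bounded-convergence theorem, since all integrands are bounded by $1$ against a finite measure. I would also briefly check that the total mass of $\d[-g]$ on $[d_\y,\infty)$ equals $g(d_\y)-\lim_{r\to\infty}g(r)\leq 1$. This holds because $\ell(r)\to 0$ forces $g\to 0$.
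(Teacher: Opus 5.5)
Your argument is correct and shares the paper's outer step: each gradient entry is the integral of the functional derivative over a cell of area $A/D$. You obtain the pointwise bound differently, however.

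The paper does not use the noise-limited closed forms at all. It bounds the general derivatives \eqref{pcdiff} and \eqref{eq:aa_functional_derivative_final} of the interference-inclusive expressions, using $h,b\in[0,1]$, $0\le\int_{\cA}\lambda(\x)h(\lambda,\x)b(\x,\y)\,\d\x\le p_c(\lambda,\z)\le 1$, and $(1+\ell(r)/(\tau\ell(d_\y)))^{-1}\in[0,1]$. Each derivative is then a difference of two quantities in $[0,1]$, so its magnitude is at most $1$.

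That route is more general: it holds for \eqref{insteq} and \eqref{avSINR1} at any noise level, so the noise-limited hypothesis is never used. It also matches the statement's literal reference to those equations, but it gives only a two-sided bound. Your route also gives the sign: every entry lies in $[0,A/D]$, i.e., coverage is nondecreasing in each cell's intensity. It also applies directly to the concave objectives that are used for the $E_o$ bound.

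This distinction matters in the instantaneous case, because \eqref{eq:noise_ia} is not a specialization of \eqref{insteq}. Deleting the $b$-term in \eqref{insteq} yields $\int_{\cA}\lambda(\x)e^{-a(\x)}\d\x$, not $1-\exp(-\int_{\cA}\lambda(\x)e^{-a(\x)}\d\x)$, since the $\tau>1$ at-most-one-BS argument relies on interference. So the paper's argument via \eqref{pcdiff} does not literally cover that objective, while yours does. Conversely, if the lemma is read as a claim about \eqref{insteq} itself, you would need the paper's bound.

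For average association the two views coincide. Integrating your Stieltjes formula by parts gives $g(d_\y)e^{-p(d_\y,\z)}-\int_{d_\y}^\infty g(r)f_{R_\z}(r)\,\d r$, which is \eqref{eq:aa_functional_derivative_final} with the interference term removed.
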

\begin{IEEEproof}
    Observe that in \eqref{pcdiff}, both $h(\lambda,\y)$ and $b(\x,\y)$ lie in the interval $[0,1]$. Therefore, the right-hand side is bounded as $\abs{\frac{dp_c(\lambda,\z)}{d\lambda}(\y)}\leq 1$. The required bound follows as
    \begin{align}
        \norm{\nabla p_c(\lam_\d,\z)}_\infty &= \max_i \abs{[\nabla p_c(\lam_\d,\z)]_i} \\
        &= \max_i \abs{\int_{\cA_i} \frac{dp_c(\lambda,\z)}{d\lambda}(\y)\d\y} \leq \abs{\cA_i} = \tfrac{A}{D}.\nonumber
    \end{align}

    For the average-SINR association case,  we can again use these bounds and the fact that $0\leq \left(1+\frac{\ell(r)}{\tau\ell(d_\y)}\right)^{-1}\leq 1$ in \eqref{eq:aa_functional_derivative_final}, so as to conclude that $\abs{\frac{\d p_c^a(\lambda,\z)}{\d\lambda}(\y)} \leq 1$. Therefore, proceeding as above, we obtain the required bound. 
    \end{IEEEproof}

Since the stochastic gradient $\g^{(t)}$ in \eqref{sgdef} is unbiased and, from the preceding lemma, satisfies $\norm{\g^{(t)}}_\infty\leq A/D$, we can directly apply the standard stochastic mirror-ascent bound for concave objectives \cite{nemirovski2009robust}. In particular, using $\bar{\lam}_\d^T$ as the step-size-weighted average of the iterates and starting from uniform initialization and constant step-size, we have
\begin{align}
E_o \leq N\sqrt{\frac{2\log D}{T}}.
\end{align}
We can then add this bound to the other terms in Thm. \ref{thm:e2e} to obtain the full end-to-end bound. 

\section{Effect of noise power}\label{noise_sweep}
In this section, we again consider the experiment with users distributed along an annulus, as in Fig. \ref{fig:basicmaps}(a), but with different values of noise power $\sigma^2$. As expected, Fig. \ref{fig:ring_sweep} shows that when the noise is high, the optimized intensity remains close to the users to reduce path loss. As noise decreases, a more centrally concentrated intensity can provide coverage to users around the annulus. However, as noise is reduced further, interference becomes relatively more important, and deploying multiple BSs at the center results in interference. Spreading the intensity back towards the annulus increases the likelihood that a user is close to one BS while the others are farther away, thereby improving the SINR and coverage probability.

\begin{figure*}[th]
	\centering
	\includegraphics[width=\linewidth, trim = 6mm 2mm 0mm 2mm, clip] {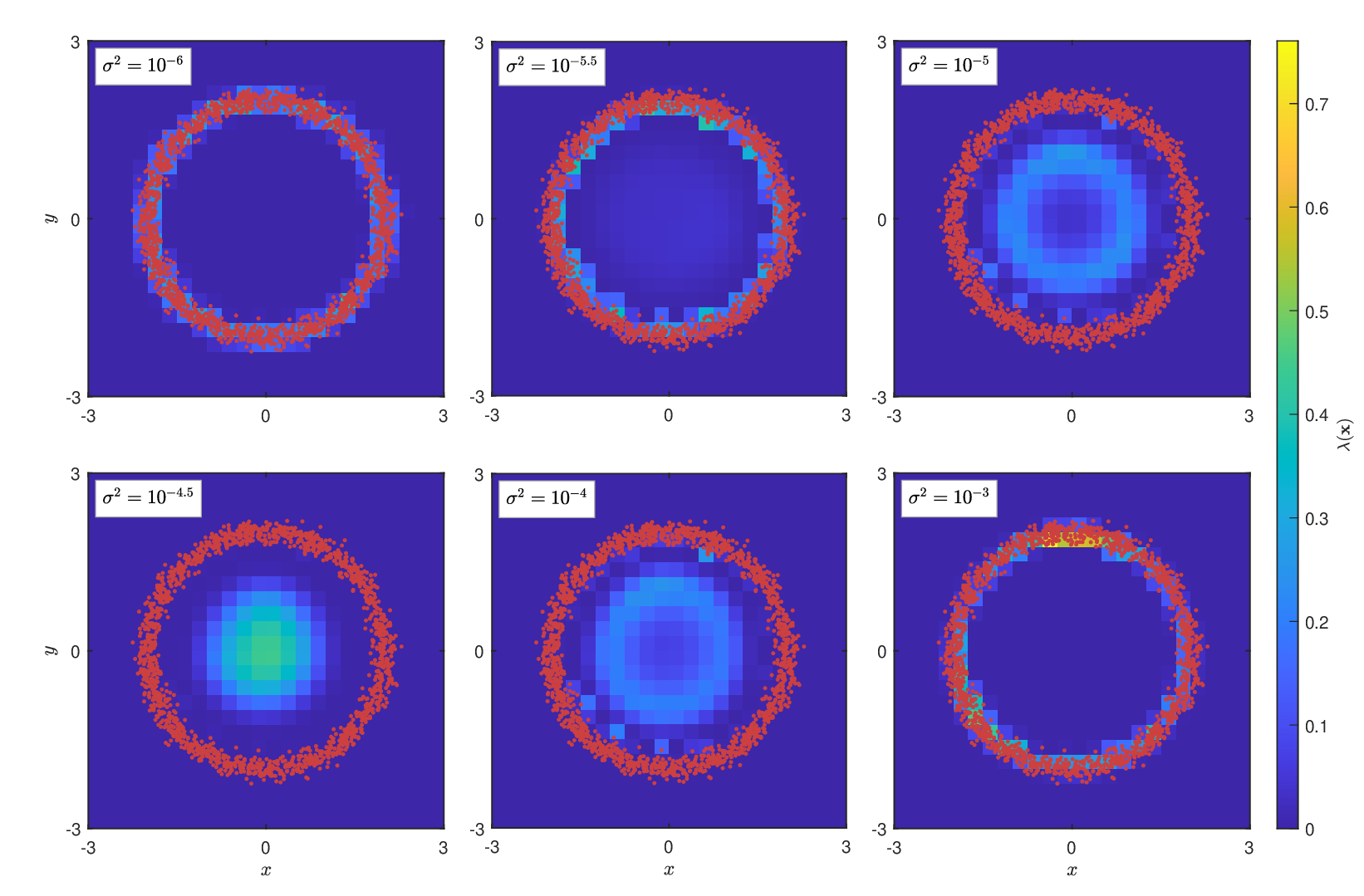}
	\caption{Intensity maps across a range of noise powers. }
	\label{fig:ring_sweep}
\end{figure*}

\section{Effect of the Association Rule}
\label{sec:association_comparison}
In this section, we illustrate how the assumed association rule can affect the optimized intensity. We consider the region $\cA=[-5,5]^2$ with two existing BSs placed symmetrically on both sides of a corridor containing 80\% of the 300 users. We also have a separate group of users at $(3,3)$ containing the remaining users. We take $\sigma^2 = 2\times 10^{-5}$, while the remaining propagation parameters  are the same as specified in Table~\ref{tab:numparams}. The noise parameter is chosen to make the effect of the association rule more visible.

\begin{figure*}[th]
	\centering
	\includegraphics[width=\linewidth, trim = 6mm 2mm 0mm 2mm, clip] {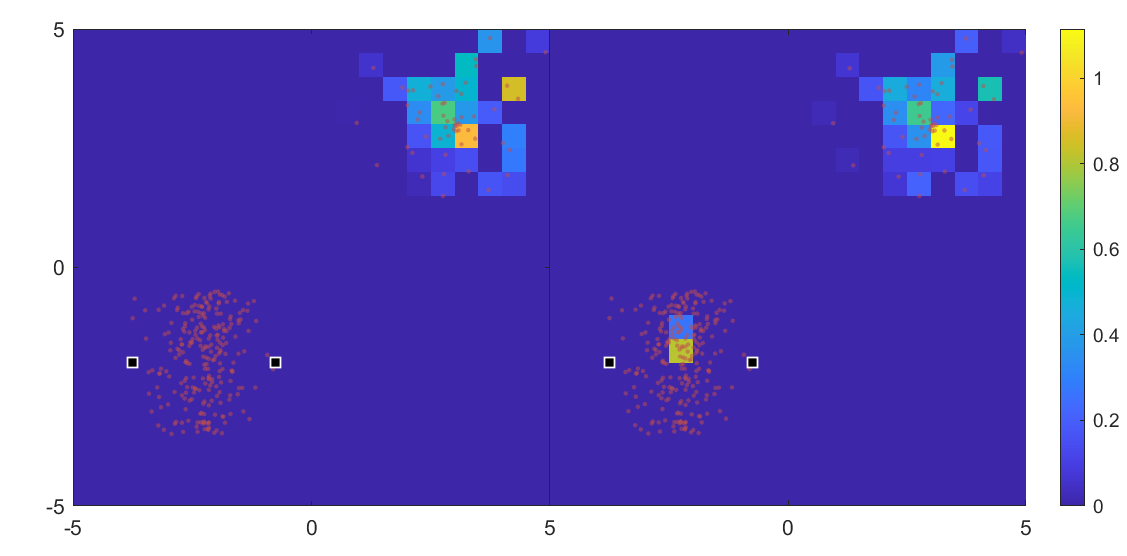}
	\caption{Optimized intensities for the instantaneous-SINR (left) and average-SINR (right) association rules. }
	\label{fig:ia_aa_compare}
\end{figure*}

Fig.~\ref{fig:ia_aa_compare} compares the intensities optimized under instantaneous-SINR association (IA) and average-SINR association (AA). The difference is concentrated in the corridor, where the users lie between the two existing BSs. Under AA, each user is associated with the nearest BS, while the other BS contributes only interference; hence, some additional intensity is allocated within the corridor. Under IA, either existing BS may serve a user depending on the instantaneous fading, reducing the need for additional deployment in this region. The user group centered at $(3,3)$ is far from both existing BSs, and therefore the two association rules yield nearly identical intensities there. Finally, the optimized training coverage probability is $p_c=0.71$ under instantaneous-SINR association and $p_c=0.61$ under average-SINR association. Thus, the association rule can significantly affect both coverage and deployment priorities in regions containing multiple existing BSs.

\footnotesize
 \bibliographystyle{IEEEtran} 
 \bibliography{IEEEabrv,references}   
\end{document}